\newtheorem{theorem}{Theorem}
\newtheorem{corollary}[theorem]{Corollary}
\newtheorem{definition}[theorem]{Definition}
\newtheorem{proposition}[theorem]{Proposition}
\newtheorem{lemma}[theorem]{Lemma}
\newtheorem{observation}[theorem]{Observation}
\newcommand{\dir}[1]{\overrightarrow{#1}}
\newcommand{\Tr}{\dir\triangleleft}
\newcommand{\SEP}{\dir{\gamma^{\text{\tiny{S}}}}}
\newcommand{\id}{\dir{\gamma^{\text{\tiny{ID}}}}}
\begin{document}
\title{Characterizing extremal digraphs for identifying codes and extremal cases of Bondy's theorem on induced subsets\thanks{This research is supported by the ANR Project IDEA {\scriptsize $\bullet$} {ANR-08-EMER-007},  2009-2012.}}



\author{Florent~Foucaud  \and
        Reza~Naserasr \and
        Aline~Parreau
}

\maketitle

\begin{abstract}
An identifying code of a (di)graph $G$ is a dominating subset $C$ of the vertices of $G$
such that all distinct vertices of $G$ have distinct (in)neighbourhoods within $C$.
In this paper, we classify all finite digraphs which only admit their whole vertex set as
an identifying code. We also classify all such infinite oriented graphs. Furthermore, by
relating this concept to a well-known theorem of A.~Bondy on set systems, we classify the
extremal cases for this theorem.

\end{abstract}

\section{Introduction}

Identifying codes are dominating sets having the property that any two
vertices of the graph have distinct neighbourhoods within the
identifying code. As a consequence, they can be used to uniquely
identify or locate the vertices of a graph. Identifying codes have
been widely studied since the introduction of the concept
in~\cite{KCL98}. The theory is applied to problems such as  
fault-diagnosis in multiprocessor systems~\cite{KCL98} or
emergency sensor networks in facilities~\cite{DRSTU03}.
The concept of identifying codes is an extension of previous works on
locating-dominating sets, studied in e.g.~\cite{CHL07,RS84}.
Identifying codes have first been studied in undirected graphs,
but the concept has naturally been extended to
directed and oriented graphs~\cite{CGHLMM06,CHL02,S07}.

In this paper, we study classes of directed
graphs which only admit large minimum identifying codes, extending earlier
works on such problems for undirected graphs (see e.g.~\cite{CHL07,FGKNPV10,GM07}).
We also relate the problem of identifying codes in directed graphs to a set theoretic
problem first studied by A.~Bondy in~\cite{B72}.

Let us now give some notations and definitions. In the following and
unless otherwise stated, we will denote by $D$, a directed graph with
vertex set $V(D)$ and arc set $A(D)$. A directed graph will
conveniently be called a \emph{digraph}. An arc pointing from vertex
$u$ towards vertex $v$ will be denoted $\dir{uv}$. The arcs $\dir{uv}$
and $\dir{vu}$ are called \emph{symmetric}. A digraph without any
symmetric arcs will be called an \emph{oriented graph}.  For a vertex
$u$ of $D$, we denote by $N^-(u)$, the set of outgoing neighbours of
$u$, and by $N^+(u)$, the set of incoming neighbours of $u$. By
$d^+(u)$ and $d^-(u)$, respectively, we denote the in-degree and the
out-degree of the vertex $u$. Let $B_1^+(u)$ and $B_1^-(u)$ denote the
incoming and outgoing balls of radius~1 centered at $u$ (they include
$u$). A vertex of $D$ without any incoming arc is called a
\emph{source} in $D$. The \emph{underlying graph} of $D$ is the
graph obtained from $D$ by ignoring the directions of the arcs. We say
that $D$ is \emph{connected} if its underlying graph is connected.

In a digraph $D$, a \emph{dominating set} is a subset $C$ of $V(D)$ such that for all $u\in V(D)$,
$B_1^+(u)\cap C\neq\emptyset$.
A \emph{separating code} of $D$ is a subset $C$ of vertices such that each pair of vertices
$u, v$  of $D$ is \emph{separated}, that is: $B_1^+(u)\cap C\neq B_1^+(v)\cap
C$. If $C$ is both a dominating set and a separating code, it is called an \emph{identifying code} of $D$. 

Two vertices $u$ and $v$ are called \emph{twins} in a digraph $D$ if $B_1^+(u)=B_1^+(v)$. 
It is easy to observe that a digraph admits a separating code and, therefore, an identifying code,
if and only if it has no pair of twins ($C=V(D)$ is one such code in this case).
Such digraphs are called \emph{twin-free}. Note that if two vertices $u$ and $v$ are twins,
necessarily the symmetric arcs $\dir{uv}$ and $\dir{vu}$ must exist. As a consequence,
oriented graphs are always twin-free.
 
In this paper, we will also consider separating codes in undirected
bipartite graphs.  Given a bipartite graph $G=(S\cup T,E)$ with $S$
and $T$ being two parts, an \emph{$S$-separating code} of $G$ is a
subset $C$ of vertices in $T$ such that $N(u)\cap C\neq N(v)\cap C$
for all distinct vertices $u$ and $v$ of $S$. If moreover, every
vertex of $S$ is also dominated by some vertex of $C$, then $C$ is
called an \emph{$S$-discriminating code} of $G$. When $G$ admits an
$S$-discriminating code then we say that $G$ is
\emph{$S$-identifiable}. The concept of $S$-separating codes has
also been studied under the name of \emph{test collection problem},
see e.g.~\cite{MS85}. The concept of $S$-discriminating codes was
introduced in~\cite{CCCHL08}. We note that to find an identifying
code, respectively a separating code, of a graph or a digraph one
could form its incidence bipartite graph and search for an
$S$-discriminating or an $S$-separating code in this incidence
bipartite graph.

One of the most natural questions in the study of separating or identifying codes is to find 
the smallest such set for a given graph.
We denote by $\SEP(D)$ and by $\id(D)$, the minimum sizes of a separating code and of
an identifying code of $D$, respectively. It is observed that $\SEP(D) \leq \id(D) \leq \SEP(D)+1$.
Determining the value of $\id(D)$ was proven to be NP-hard
in~\cite{CHL02}, even for strongly connected bipartite oriented graphs and for bipartite
oriented graphs without directed cycles.

The theory of identifying codes is studied mostly for undirected graphs, which can be viewed as
symmetric digraphs. The following is one of the first general bounds:

\begin{theorem}[\cite{B01,GM07}]\label{thm:existence}
For any symmetric finite twin-free digraph $D$ having at least one arc, we have  $\id(D) \le |V(D)|-1$.
\end{theorem}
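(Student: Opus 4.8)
The plan is to prove Theorem~\ref{thm:existence} by a double-counting or extremal argument that constructs an identifying code avoiding a single well-chosen vertex. First I would recast the problem: since $D$ is symmetric, we may work with the underlying undirected graph $G$, and an identifying code is a dominating set $C$ such that the closed neighbourhoods $N[u]\cap C$ are pairwise distinct. The goal is to find a vertex $w$ such that $C=V(D)\setminus\{w\}$ works, i.e.\ $V(D)\setminus\{w\}$ is both dominating and separating.

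Next I would analyze when $V(D)\setminus\{w\}$ fails. It fails to dominate only if $w$ is an isolated vertex together with... actually $w$ must have a neighbour (otherwise, since $D$ has an arc and could be disconnected, one handles components separately), so domination by $V\setminus\{w\}$ is automatic as long as $w$ is not isolated; an isolated vertex would be a twin of another isolated vertex or could simply be removed by choosing $w$ among the non-isolated part. The separation condition $N[u]\cap(V\setminus\{w\}) \ne N[v]\cap(V\setminus\{w\})$ fails for a pair $u,v$ precisely when $N[u]$ and $N[v]$ differ only in the element $w$, i.e.\ $N[u]\triangle N[v]=\{w\}$. Since $D$ is twin-free, $N[u]\ne N[v]$ for every pair, so the only ``dangerous'' pairs for a given $w$ are those with $N[u]\triangle N[v]=\{w\}$. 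Thus I need to choose $w$ so that no pair $\{u,v\}$ has $N[u]\triangle N[v]=\{w\}$.

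The key combinatorial step is then to show such a $w$ always exists. I would set up an auxiliary graph (or multigraph) $H$ on $V(D)$ where I put an edge between $u$ and $v$ labelled $w$ whenever $N[u]\triangle N[v]=\{w\}$; note that if $N[u]\triangle N[v]=\{w\}$ then $w\in\{u,v\}$ is impossible in general but $w$ could be various vertices. The crucial observation is that for a fixed $w$, the relation ``$N[u]\triangle N[v]=\{w\}$'' is very restrictive: if both $N[u]\triangle N[v]=\{w\}$ and $N[u]\triangle N[u']=\{w\}$ then $N[v]\triangle N[u']=\emptyset$, contradicting twin-freeness unless $v=u'$. Hence for each $w$ there is \emph{at most one} dangerous pair, and moreover that pair, call it $\{a_w,b_w\}$, satisfies (up to symmetry) $N[a_w]=N[b_w]\cup\{w\}$, so $w\in N[a_w]$ meaning $w$ is adjacent or equal to $a_w$. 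I would then argue by a counting/parity or a direct pigeonhole argument over the vertices that one cannot have a dangerous pair for every single vertex $w$ simultaneously, using that $D$ has an arc and is finite and twin-free; for instance, following the structure of each dangerous pair ($N[a_w]$ strictly contains $N[b_w]$) one gets a chain or forest structure on these pairs whose leaves yield a safe $w$.

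The hard part will be the last step: ruling out the degenerate configuration in which \emph{every} vertex is the unique element of symmetric difference of some pair. I expect one must exploit the containment $N[b_w]\subsetneq N[a_w]$ to build a strict partial order (the map $w\mapsto a_w$ cannot be a permutation without a cycle, and a cycle would force a $\subsetneq$-cycle, impossible), or alternatively handle small/irregular cases (like $D$ being a single symmetric arc, or a star, or a disjoint union) separately and then induct on components; connectivity reductions and the base cases of very small graphs are where the care is needed. I would also keep in mind the possibility of instead choosing $C$ as $V(D)$ minus a vertex of maximum degree or a suitably chosen leaf, which may make the dangerous-pair analysis cleaner.
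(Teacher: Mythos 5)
The paper does not actually prove Theorem~\ref{thm:existence}: it is quoted from \cite{B01,GM07}, and is later recovered as a corollary of Theorem~\ref{M(D)=|V(D)|} (the extremal digraphs of that classification have no symmetric arcs). So your argument must stand on its own. Your overall strategy --- find one vertex $w$ such that $V(D)\setminus\{w\}$ is an identifying code, after observing that this set fails only if $w$ is isolated or some pair satisfies $N[u]\,\triangle\,N[v]=\{w\}$ --- is the right one (it is essentially the route of \cite{GM07}) and that reduction is correct.

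However, there are two genuine gaps. First, the claim that each $w$ admits at most one dangerous pair is false: your argument only handles two dangerous pairs that share a vertex. In the path $P_5=b\,a\,w\,a'\,b'$ the centre $w$ has the two \emph{disjoint} dangerous pairs $\{a,b\}$ and $\{a',b'\}$, since $N[a]\,\triangle\,N[b]=N[a']\,\triangle\,N[b']=\{w\}$. Second, and fatally, the mechanism you propose for the decisive step --- showing the bad vertices cannot exhaust $V(D)$ --- does not work: the strict containment available is $N[b_w]\subsetneq N[a_w]$, which relates the two members of the pair to \emph{each other}, not to $w$; hence a cycle of the map $w\mapsto a_w$ produces no cycle of strict containments (e.g.\ $a_{w_1}=w_2$ and $a_{w_2}=w_1$ is perfectly consistent), and there is no reason for that map to be a permutation in the first place. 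What is really needed here is, in essence, Theorem~\ref{Bondy}: among $n$ distinct subsets of an $n$-set, at most $n-1$ elements can occur as the unique difference of some pair. A clean way to get it in your setting: let $H$ be the graph on $V(D)$ joining $u,v$ whenever $|N[u]\,\triangle\,N[v]|=1$, labelling the edge by the differing element. Around any cycle of $H$ each label occurs an even number of times, so every label already appears on a spanning forest of $H$; hence there are at most $n-c$ labels, where $c$ is the number of components of $H$. Since an isolated vertex of $D$ is also isolated in $H$ (forcing $c\ge 2$ when one exists, and there is at most one by twin-freeness), the bad vertices number at most $n-1$ in all cases and a good $w$ exists. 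With this lemma your proof closes and is a genuinely different, more elementary route than the paper's, which obtains the bound only through the full classification of Theorem~\ref{M(D)=|V(D)|}; without it, the sketch has a hole precisely at its hardest point.
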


Undirected graphs achieving this bound have been characterized in~\cite{FGKNPV10}.
For the directed analogue of this classification problem, the class of digraphs satisfying
$\id(D)=|V(D)|$ is already rich.
In this paper we classify all such finite digraphs. This is done in Section~\ref{sec:finite}.
A simple corollary of our work is an extension of Theorem~\ref{thm:existence}
which says that every digraph with at least one symmetric arc satisfies: $\id(D) \le |V(D)|-1$.
In Section~\ref{sec:infinite} we classify all infinite oriented graphs in which the whole
vertex set is the only identifying code.  In Section~\ref{sec:Bondy}, we discuss the relationship
between this result and the following  theorem of A.~Bondy on "induced subsets":

\begin{theorem}[Bondy's theorem \cite{B72}]\label{Bondy}
Let  ${\mathcal A}=\{A_1,A_2, \cdots, A_n\}$ be a collection of $n$ distinct subsets of an
$n$-set $X$. Then there exists an element $x$ of $X$ such that the sets
$A_1-x,A_2-x, \cdots, A_n-x$ are all distinct.
\end{theorem}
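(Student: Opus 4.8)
The plan is to argue by contradiction, extracting from a supposed counterexample a structure that is "too large" to fit inside an $n$-set. Suppose $\mathcal{A}=\{A_1,\dots,A_n\}$ are distinct subsets of $X=\{x_1,\dots,x_n\}$ such that for \emph{every} $x\in X$, removing $x$ collapses at least two of the sets; that is, for each $x_j$ there is a pair $\{A_{i},A_{i'}\}$ with $A_i \bigtriangleup A_{i'} = \{x_j\}$ (their symmetric difference is exactly the singleton $\{x_j\}$). Pick one such witnessing pair for each $j$, giving $n$ (not necessarily distinct) pairs, each associated to a distinct element $x_j$. First I would form an auxiliary graph $H$ on vertex set $\mathcal{A}$, putting an edge between $A_i$ and $A_{i'}$ for each witnessing pair; label that edge by the element $x_j$ it witnesses. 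Distinct elements give distinctly labelled edges, so $H$ has at least $n$ edges (a priori; if some pair is witnessed by two different elements that is impossible, since a pair has a unique symmetric difference — so in fact we get $n$ \emph{distinct} edges). Thus $H$ is a graph on $n$ vertices with $n$ edges, hence contains a cycle.

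The key step is to analyze such a cycle. Let $A_{i_1}, A_{i_2}, \dots, A_{i_k}, A_{i_1}$ be a cycle in $H$, with the edge from $A_{i_t}$ to $A_{i_{t+1}}$ labelled $x_{j_t}$, the labels $x_{j_1},\dots,x_{j_k}$ being pairwise distinct. Along the cycle, $A_{i_{t+1}}$ is obtained from $A_{i_t}$ by toggling the single element $x_{j_t}$ (adding it if absent, deleting it if present). Going all the way around the cycle and returning to $A_{i_1}$, every element must be toggled an even number of times. But each label $x_{j_t}$ appears exactly once as an edge-label around the cycle (the labels are distinct), so $x_{j_t}$ is toggled exactly once — an odd number of times. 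This forces the cycle to have no edges, a contradiction. Hence no such collection $\mathcal{A}$ exists, proving the theorem.

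The main obstacle I anticipate is making the counting around the cycle fully rigorous: one must be careful that the "toggle" operation composes correctly (symmetric differences of singletons commute and compose additively over $\mathbb{F}_2^X$), and that a cycle in $H$ genuinely revisits its start vertex so that the net symmetric difference is empty. Phrasing everything in terms of the vector space $\mathbb{F}_2^X$ — identify each $A_i$ with its characteristic vector, each edge with a standard basis vector $e_{j}$ — makes this transparent: a cycle in $H$ yields a nontrivial $\mathbb{F}_2$-linear dependence $\sum_{t} e_{j_t} = 0$ among \emph{distinct} basis vectors, which is absurd. A secondary point to check is the degenerate case where $\mathcal{A}$ has repeated... — but the sets are assumed distinct, and if $n=0$ or the $A_i$ have a trivial configuration the statement holds vacuously or directly, so these cases can be dispatched in a line. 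I would present the $\mathbb{F}_2$ formulation as the clean version and keep the graph/cycle picture as the intuition.
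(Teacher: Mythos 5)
Your proof is correct. Note that the paper itself does not prove this statement: Theorem~2 is quoted as a classical result of Bondy, with the authors explicitly deferring to the references (Babai--Frankl, Bollob\'as, Lov\'asz, Winter) for the ``various proofs of different nature.'' Your argument is essentially the standard graph-theoretic/linear-algebraic proof found in those sources: the hypothesis that every $x_j$ fails yields, for each $j$, a pair $A_i,A_{i'}$ with $A_i\bigtriangleup A_{i'}=\{x_j\}$; the uniqueness of a pair's symmetric difference makes the map from elements to witnessing pairs injective, so the auxiliary graph on the $n$ sets has $n$ distinct edges and hence a cycle; and summing the corresponding distinct standard basis vectors of $\mathbb{F}_2^X$ around the cycle gives $\sum_t e_{j_t}=0$, which is impossible. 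All the steps check out, including the two points you flag yourself (the edge labels around a cycle are pairwise distinct because the edges are, and a cycle is nonempty by definition), and the degenerate cases $n\le 1$ are handled since the contradiction hypothesis requires at least one pair. This is a complete and clean proof, fully in the spirit of the proofs the paper cites.
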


Here $A_i-x$ could be the empty set. Though, to be distinct, at most one of them could be
empty. In Section~\ref{sec:Bondy} we classify all the set systems in which for any possible
choice of $x$ in Bondy's theorem we will have $A_i-x=\emptyset$ for some $i$.

Since its publication in 1972, this theorem has received a lot of
attention and various proofs of different nature have been provided
for it, we refer to \cite{BF92,B86,B72,L79,W00} for some interesting
proofs.

\section{Finite extremal digraphs}\label{sec:finite}

In this section, we classify all finite digraphs in which the whole
vertex set is the only identifying code. We begin with the following definitions of operations on
digraphs, illustrated in Figure~\ref{fig:operations}.

Given two digraphs $D_1$ and $D_2$ on disjoint sets of vertices, we define
the \emph{disjoint union} of $D_1$ and $D_2$, denoted $D_1 \oplus D_2 $, to
be the digraph whose vertex set is $V(D_1)\cup V(D_2)$ and whose arc set
is $A(D_1)\cup A(D_2)$.

Given a digraph $D$ and a vertex $x$ not in $V(D)$ we define $x
\dir\triangleleft(D)$ to be the digraph with vertex set $V(D)\cup
\{x\}$ whose arcs are the arcs of $D$ together with each arc
$\dir{xv}$ for every $v\in V(D)$.

\begin{figure}[ht]
\centering
\scalebox{0.6}{
\begin{tikzpicture}
\draw[line width=1pt, draw = black] (-9,0) ellipse(1cm and 2cm) node{\Large $D_1$};
\draw[line width=1pt, draw = black] (-6,0) ellipse(1cm and 2cm) node{\Large $D_2$};
\path (-7.5,-3) node (m) {\Large $D_1\oplus D_2$};

\path (-0.2,0) node[draw,shape=circle] (a) {\Large $x$};
\draw[line width=1pt, draw = black] (4,0) ellipse(1cm and 2cm) node{\Large $D$};
\draw[->,>=latex, thick] (a) -- +(3.7,-1.8);
\draw[->,>=latex, thick] (a) -- +(3.3,-1);
\draw[->,>=latex, thick] (a) -- +(3.2,0);
\draw[->,>=latex, thick] (a) -- +(3.3,1);
\draw[->,>=latex, thick] (a) -- +(3.7,1.8);
\path (1.5,-3) node (n) {\Large $x\Tr(D)$};
\end{tikzpicture}}
\caption{Illustrations of the two operations}
\label{fig:operations}
\end{figure}
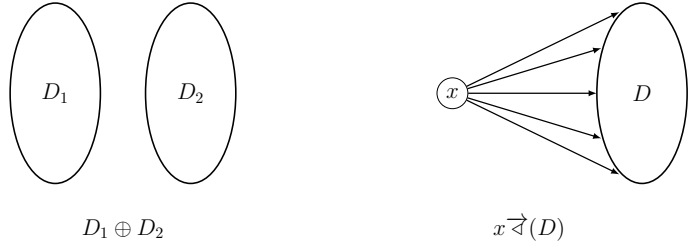

\begin{definition}\label{def:family}
Let $(K_1, \oplus, \dir\triangleleft )$ be the closure of the one-vertex graph $K_1$ with respect to the operations
$\oplus$ and $\dir\triangleleft$. That is, the class of all graphs that can be built
from $K_1$ by repeated applications of $\oplus$ and $\dir\triangleleft$.
\end{definition}

The \emph{transitive closure} of a digraph $D$ is a digraph obtained
from $D$ by adding the arc $\overrightarrow{xy}$ whenever there is a
directed path from $x$ to $y$. A \emph{rooted oriented tree} is an
oriented tree with a specific vertex $v$ (called the root) such that
for every other vertex $u$ the path connecting $u$ to $v$ is a
directed path from $v$ to $u$. In such a tree, given an arc
$\overrightarrow{xy}$, we say that $x$ is the father of $y$ and that
$y$ is a child of $x$. The vertices with a directed path to $x$ are
called \emph{ancestors} of $x$ and vertices with a directed path from
$x$ are \emph{descendents} of $x$. A \emph{rooted oriented forest} is
a disjoint union of rooted oriented trees.

By these definitions it is easy to check that:

\begin{observation}\label{TransitveClosuresOfTrees}
 Every element of $(K_1, \oplus, \dir\triangleleft )$ is the transitive 
closure of a rooted oriented forest.
\end{observation}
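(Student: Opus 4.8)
The plan is to prove this by structural induction on the construction of an element $D \in (K_1, \oplus, \dir\triangleleft)$, simultaneously tracking a rooted oriented forest $F$ whose transitive closure is $D$. The base case is immediate: $K_1$ is a single vertex, which is a (trivial) rooted oriented tree equal to its own transitive closure. For the inductive step there are two cases according to which operation was last applied.

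First I would handle the $\oplus$ case. If $D = D_1 \oplus D_2$, then by the induction hypothesis $D_i$ is the transitive closure of a rooted oriented forest $F_i$ for $i \in \{1,2\}$; since the vertex sets are disjoint, $F_1 \cup F_2$ (as a disjoint union) is again a rooted oriented forest, and taking the transitive closure commutes with disjoint union because no directed path crosses between $V(D_1)$ and $V(D_2)$. Hence $D$ is the transitive closure of $F_1 \cup F_2$.

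Next I would handle the $x\Tr(\cdot)$ case, which is the only place any real care is needed. Suppose $D = x\Tr(D')$ where, by induction, $D'$ is the transitive closure of a rooted oriented forest $F'$ with root set (one root per component) $\{r_1, \dots, r_k\}$. Form the rooted oriented tree $F$ by adding the new vertex $x$ together with the arcs $\dir{xr_1}, \dots, \dir{xr_k}$; then $x$ is a root from which every vertex of $F'$ is reachable by a directed path, so $F$ is indeed a rooted oriented tree. It remains to check that the transitive closure of $F$ is exactly $D$. The directed paths internal to $F'$ give precisely the arcs of $D'$ (by the induction hypothesis), and the directed paths in $F$ starting at $x$ reach every vertex $v \in V(D')$, contributing exactly the arcs $\dir{xv}$ for all $v \in V(D')$; together these are precisely the arcs of $x\Tr(D') = D$, and no other arcs are created (nothing points into $x$). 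This establishes the claim.

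I do not expect a genuine obstacle here — the statement is essentially a bookkeeping exercise — but the one point to state carefully is that the $x\Tr$ operation adds an arc from $x$ to \emph{every} vertex of $D'$, not merely to the roots of $F'$, which is exactly why one must pass to the transitive closure rather than keeping $F$ alone; conversely, one should note that in $F$ every vertex other than $x$ retains a unique father (its father in $F'$, or $x$ if it was a root of $F'$), so $F$ really is an oriented tree and the rooted-tree path-direction condition is preserved.
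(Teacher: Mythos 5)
Your proof is correct. The paper offers no proof at all for this statement (it is prefaced only by ``By these definitions it is easy to check that''), and your structural induction on the two operations --- with the key point that $x\Tr(\cdot)$ adds arcs from $x$ to \emph{all} vertices, which is exactly what attaching $x$ as a new root above the old roots and then taking the transitive closure produces --- is precisely the routine verification the authors are implicitly relying on.
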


For a digraph $D$ in $(K_1, \oplus,
\dir\triangleleft )$ let us denote by $\dir{F}(D)$
the rooted oriented forest whose transitive closure is $D$.

\begin{proposition}\label{TreeRepresentation}
 For every digraph $D$ in  $(K_1, \oplus, \dir\triangleleft )$ we have
 $\id(D)=|V(D)|$. Furthermore, if a vertex $x$ is a source in $D$, then $V(D)-x$ is
a separating code of $D$. Otherwise, the vertex $x$ and its father in $\dir{F}(D)$
are the only ones not being separated from each other by the set $V(D)-x$.
\end{proposition}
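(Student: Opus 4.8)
The plan is to phrase everything in terms of the rooted oriented forest $\dir{F}(D)$ furnished by Observation~\ref{TransitveClosuresOfTrees}, so that the statement becomes elementary combinatorics of root-to-vertex paths. The first step is the structural observation that, since $D$ is the transitive closure of $\dir{F}(D)$, the set $N^+(u)$ of incoming neighbours of a vertex $u$ is exactly its set $\mathrm{anc}(u)$ of ancestors in $\dir{F}(D)$; hence $B_1^+(u)=\{u\}\cup\mathrm{anc}(u)$ is precisely the unique directed path $P_u$ in $\dir{F}(D)$ from the root of $u$'s tree down to $u$. Two consequences I will use repeatedly: $u$ is a source of $D$ if and only if $u$ is a root of $\dir{F}(D)$ (equivalently $\mathrm{anc}(u)=\emptyset$), and $P_u=P_v$ forces $u=v$ because distinct root-paths have distinct deepest vertices; in particular $D$ is twin-free, so $\id(D)$ is well defined and $\id(D)\le|V(D)|$.

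Next I fix $x\in V(D)$, write $C=V(D)\setminus\{x\}$, and use that $B_1^+(u)\cap C=P_u\setminus\{x\}$ for every $u$. The ``not separated'' half is immediate: if $x$ is not a source and $y$ is its father in $\dir{F}(D)$, then $\mathrm{anc}(x)=\{y\}\cup\mathrm{anc}(y)=P_y$ and $x\notin P_y$, so $B_1^+(x)\cap C=P_y=B_1^+(y)\cap C$. For the ``only these'' half I would distinguish three cases. (i) If $u\ne v$ with $x\notin\{u,v\}$: from $P_u\setminus\{x\}=P_v\setminus\{x\}$ and $P_u\ne P_v$ one gets $P_u\bigtriangleup P_v=\{x\}$, so, say, $P_u=P_v\sqcup\{x\}$ with $P_v\subsetneq P_u$; then $v$ is a proper ancestor of $u$, the deepest vertex of $P_u$ is $u\notin P_v$, hence $u=x$, a contradiction --- so every pair avoiding $x$ is separated by $C$. (ii) If $x$ is a source: $B_1^+(x)\cap C=\emptyset$ while $B_1^+(v)\cap C=P_v\setminus\{x\}\ni v$ for $v\ne x$, so $x$ is separated from all other vertices, and together with (i) this shows $C$ is a separating code. (iii) If $x$ is not a source with father $y$ and $v\ne x,y$: if $x\notin P_v$ then $B_1^+(v)\cap C=P_v\ne P_y$ because $v\ne y$; if $x\in P_v$ then $x$ is a proper ancestor of $v$, so $P_v$ contains the child $c$ of $x$ lying on $P_v$, and $c$, being a strict descendant of $x$, does not belong to $P_y=\mathrm{anc}(x)$; in both sub-cases $B_1^+(v)\cap C\ne B_1^+(x)\cap C$. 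Hence $\{x,y\}$ is the only unseparated pair.

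Finally, $\id(D)=|V(D)|$ follows: any $C'\subsetneq V(D)$ is contained in $V(D)\setminus\{x\}$ for some $x$, and then either $x$ is a source and $B_1^+(x)\cap C'=\emptyset$ (so $C'$ is not dominating), or $x$ has a father $y$ and intersecting $B_1^+(x)\cap(V(D)\setminus\{x\})=B_1^+(y)\cap(V(D)\setminus\{x\})$ with $C'$ shows $C'$ does not separate $x$ and $y$; either way $C'$ is not an identifying code, so $V(D)$ is the unique one. I expect the only genuinely delicate point to be case~(iii) when $x$ is an ancestor of $v$: one must be sure that deleting $x$ from the root-path of a descendant does not make it coincide with $B_1^+(y)$, which is exactly what the surviving child $c$ of $x$ rules out; the remaining verifications are bookkeeping. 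An induction on the number of $\oplus$ and $\dir\triangleleft$ operations used to build $D$ would also work, but the forest description keeps the case analysis transparent.
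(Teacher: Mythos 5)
Your proof is correct and follows essentially the same route as the paper's: everything is read off from the forest $\dir{F}(D)$ via the key identity $B_1^+(u)=\{u\}\cup\mathrm{anc}(u)$, so that a non-root vertex is unseparated from its father and a root is undominated once removed. The only difference is one of detail: the paper's proof is a two-line sketch establishing $\id(D)=|V(D)|$, whereas you also verify the ``furthermore'' claims (that $x$ and its father form the \emph{only} unseparated pair, and that $V(D)-x$ separates when $x$ is a source), which the paper leaves to the reader; your case analysis for these is sound.
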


\begin{proof}
Let $C$ be an identifying code of $D$. Except for its roots, each vertex of the forest $\dir{F}(D)$ must be in $C$ in order to be separated
from its father. But the sources need also to be in $C$ in order to be dominated.
\qed\end{proof}

The next proposition is the directed analogue of Proposition 3 in \cite{FGKNPV10},
and since the proof is quite the same we omit it. 

\begin{proposition}\label{PropositionForInduction}
Let $D$ be a finite twin-free digraph, and let $S$ be a subset of vertices of $D$ such that $D-S$ is twin-free.
Then $\id(D)\leq \id(D-S)+|S|$.
\end{proposition}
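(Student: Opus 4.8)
The plan is to mimic the proof strategy of Proposition~3 in~\cite{FGKNPV10} in the directed setting. I would take a minimum identifying code $C'$ of $D-S$ and argue that $C := C' \cup S$ is an identifying code of $D$. Since $|C| \le |C'| + |S| = \id(D-S) + |S|$, this yields the bound. First I would verify domination: every vertex $u \in V(D)$ with $u \notin S$ lies in $D-S$, and its outgoing ball within $D-S$ already meets $C'$, hence $B_1^+(u) \cap C \ne \emptyset$ in $D$ as well (adding vertices to the graph and to the code only helps); every vertex $u \in S$ is dominated because $S \subseteq C$ and $u \in B_1^+(u)$. So $C$ is a dominating set of $D$.

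Next I would check separation. Take two distinct vertices $u, v \in V(D)$. If either of them lies in $S$, then it is separated from the other: say $u \in S$; then $u \in B_1^+(u) \cap C$, and either $u \notin B_1^+(v) \cap C$ (giving separation immediately) or $u \in B_1^+(v)$, in which case we look at $v$ — if $v \in S$ then $v \in B_1^+(v) \cap C$ and by the same token we are done unless $v \in B_1^+(u)$ too, but then $\dir{uv}$ and $\dir{vu}$ are symmetric arcs; here one must be slightly careful, so I would instead argue directly: since $u \ne v$ and $u \in S \subseteq C$, the vertex $u$ belongs to $B_1^+(u) \cap C$; it belongs to $B_1^+(v) \cap C$ iff $\dir{vu} \in A(D)$ or $v = u$ (impossible), so if $\dir{vu} \notin A(D)$ we are done, and symmetrically using $v$ if $v \in S$. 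The only remaining subcase is $u, v \in S$ with both $\dir{uv}, \dir{vu} \in A(D)$, and then one checks some third vertex of $C$ distinguishes them — but in fact the cleanest route is: $D$ is twin-free by hypothesis, so $B_1^+(u) \ne B_1^+(v)$, meaning there is a vertex $w$ with (say) $w \in B_1^+(u) \setminus B_1^+(v)$; if $w \in C$ we are done, and if $w \notin C$ then $w \notin S$ and $w \in D - S$; here I would need $w$'s role to be witnessed inside $C$, which is where the argument has to lean on $C'$. The genuinely substantive case is when both $u, v \in V(D-S)$: then $B_1^+(u) \cap V(D-S)$ and $B_1^+(v) \cap V(D-S)$ are the outgoing balls of $u,v$ in $D-S$ (outgoing neighbours never leave via $S$ being added — adding $S$ only adds arcs out of $S$, not into $D-S$... one must confirm that arcs from $S$ to $D-S$ don't spoil this, but such arcs add the same potential elements to both $B_1^+(u)$ and $B_1^+(v)$ only if they point at $u$ or $v$, which is not the case since those arcs originate in $S$), and since $C'$ separates $u$ from $v$ in $D-S$, the witness lies in $C' \subseteq C$.

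The main obstacle I anticipate is bookkeeping the interaction between newly added arcs and the balls $B_1^+$: one must be sure that restricting to $D-S$ does not change the outgoing ball of a vertex of $D-S$ in a way that depends on $S$. Concretely, for $u \in V(D-S)$, the set $B_1^+(u)$ taken in $D$ may contain vertices of $S$ (if $\dir{su} \in A(D)$ for $s \in S$), but those are all in $C$, so they cannot be the distinguishing feature for a pair $u, v$ both inside $D-S$ unless exactly one of $u,v$ receives an arc from that $s$ — but then that $s \in S \subseteq C$ already separates $u$ from $v$. So in every case the separation is witnessed by a vertex of $C = C' \cup S$. Since the paper explicitly says the proof is "quite the same" as the undirected case, I would keep the write-up short, emphasising only these two points — that $S$ dominates and separates everything touching $S$, and that $C'$ handles the rest inside $D-S$ because outgoing balls restricted appropriately are unchanged.
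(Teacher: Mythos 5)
The paper omits its own proof (deferring to the undirected analogue in~\cite{FGKNPV10}), so I can only judge your argument on its own merits --- and it has a genuine gap, located exactly where you flag uncertainty. The central claim, that $C=C'\cup S$ is an identifying code of $D$ for a minimum identifying code $C'$ of $D-S$, is false in general. Concretely, take $V(D)=\{u,v,w,a,b\}$ with arcs $\dir{uv}$, $\dir{vu}$, $\dir{wu}$, $\dir{aw}$, $\dir{ba}$, and $S=\{u,v\}$. The balls are $B_1^+(u)=\{u,v,w\}$, $B_1^+(v)=\{u,v\}$, $B_1^+(w)=\{w,a\}$, $B_1^+(a)=\{a,b\}$, $B_1^+(b)=\{b\}$, so $D$ is twin-free; $D-S$ (on $\{w,a,b\}$ with arcs $\dir{aw},\dir{ba}$) is twin-free, and one checks that its \emph{unique} minimum identifying code is $C'=\{a,b\}$, so $\id(D-S)=2$. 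But then $B_1^+(u)\cap(C'\cup S)=\{u,v\}=B_1^+(v)\cap(C'\cup S)$: the only vertex separating $u$ from $v$ is $w$, which lies in $V(D-S)\setminus C'$. This is precisely the subcase you leave open (``one checks some third vertex of $C$ distinguishes them'', and later ``here I would need $w$'s role to be witnessed inside $C$, which is where the argument has to lean on $C'$''): it cannot be closed for this choice of code, since the witness need not belong to $C'$, and here it provably does not --- not even for a different minimum code, as $C'$ is unique.

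Your domination argument and your treatment of pairs entirely inside $V(D-S)$ are correct; the failure is confined to pairs meeting $S$ whose only separating witness escapes $C'\cup S$. Note that the proposition itself still holds for the example above ($\{u,w,a,b\}$ is an identifying code of size $4=\id(D-S)+|S|$), but that code is not of the form $C'\cup S$: it trades $v\in S$ for $w\notin C'$. So a correct proof must do something genuinely different for the pairs involving $S$ --- for instance, choosing \emph{which} $|S|$ extra vertices to add (swapping vertices of $S$ for the missing witnesses) rather than adding $S$ itself --- and simply asserting that ``some third vertex of $C$ distinguishes them'' does not suffice.
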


Let $D$ be a twin-free digraph with vertex set $\{ x_1, x_2, \cdots, x_n \}$ and let 
$\mathcal A=\{B_1^+(x_1), B_1^+(x_2), \cdots, B_1^+(x_n)\}$.
Then  $(\mathcal A, V(D))$ form a set system satisfying the conditions of Bondy's theorem.
Therefore we have:

\begin{proposition}\label{PropositionNMinusOne}
Let $D$ be a finite twin-free digraph. Then $\SEP(D)\leq |V(D)|-1$.
\end{proposition}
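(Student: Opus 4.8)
The plan is to read this statement directly off Bondy's theorem (Theorem~\ref{Bondy}) applied to the set system of outgoing balls, exactly as the paragraph preceding the statement sets up. First I would fix an enumeration $V(D)=\{x_1,\dots,x_n\}$ and consider the family $\mathcal{A}=\{B_1^+(x_1),\dots,B_1^+(x_n)\}$ of subsets of the ground set $X=V(D)$, which has exactly $n$ elements. Since $D$ is twin-free, no two vertices have the same outgoing ball, so the $n$ sets of $\mathcal{A}$ are pairwise distinct; hence $(\mathcal{A},X)$ satisfies the hypotheses of Bondy's theorem.

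Next I would invoke Theorem~\ref{Bondy} to obtain an element $x\in V(D)$ such that the truncated sets $B_1^+(x_1)-x,\dots,B_1^+(x_n)-x$ are still pairwise distinct. The final step is to observe that $B_1^+(x_i)-x=B_1^+(x_i)\cap C$ for $C:=V(D)-x$, so that $B_1^+(x_i)\cap C\neq B_1^+(x_j)\cap C$ for all $i\neq j$; this is precisely the definition of $C$ being a separating code of $D$. As $|C|=|V(D)|-1$, we conclude $\SEP(D)\le |V(D)|-1$.

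There is essentially no obstacle here: the proposition is an immediate corollary of Bondy's theorem once one notes that the conclusion "the sets $A_i-x$ are all distinct" is verbatim the separation condition for the code $V(D)-x$. The only points deserving a line of care are that twin-freeness is exactly what makes the balls $B_1^+(x_i)$ distinct (so Bondy's hypothesis of $n$ \emph{distinct} subsets holds), and that the ground set $V(D)$ has exactly $n$ elements, matching the number of sets — both true by construction.
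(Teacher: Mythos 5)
Your proof is correct and is exactly the argument the paper intends: apply Bondy's theorem to the set system $\{B_1^+(x_1),\dots,B_1^+(x_n)\}$ over the ground set $V(D)$, using twin-freeness to guarantee the sets are distinct, and observe that the resulting element $x$ gives the separating code $V(D)-x$. (One cosmetic remark: the paper's convention is that $B_1^+(u)$ is the \emph{incoming} ball, so you should say "in-neighbourhoods" rather than "outgoing balls", but this does not affect the argument.)
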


The following corollary of the previous proposition will also be needed.

\begin{proposition}\label{Eq:idcode-sepcode}
In a finite twin-free digraph $D$, $\id(D)=|V(D)|$ if and only if $\SEP(D)=|V(D)|-1$ and for every
minimum separating code of $D$, there is a vertex which is not dominated.
\end{proposition}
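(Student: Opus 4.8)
The plan is to prove the equivalence directly, using Proposition~\ref{PropositionNMinusOne} and the elementary inequality $\SEP(D)\le\id(D)\le\SEP(D)+1$ as the main tools. Throughout, write $n=|V(D)|$.

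\smallskip

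For the ``if'' direction, suppose $\SEP(D)=n-1$ and every minimum separating code misses a dominated vertex. I want to show no identifying code of size $n-1$ exists, since then $\id(D)\ge n$, and equality follows from Proposition~\ref{PropositionNMinusOne} together with $\id(D)\le\SEP(D)+1=n$. So suppose for contradiction that $C$ is an identifying code with $|C|\le n-1$. Then $C$ is in particular a separating code, so $|C|\ge\SEP(D)=n-1$, forcing $|C|=n-1$ and making $C$ a \emph{minimum} separating code. By hypothesis there is a vertex $u$ with $B_1^+(u)\cap C=\emptyset$, contradicting that $C$ is dominating. Hence no identifying code of size $n-1$ exists and $\id(D)=n$.

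\smallskip

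For the ``only if'' direction, suppose $\id(D)=n$. First, by Proposition~\ref{PropositionNMinusOne} we have $\SEP(D)\le n-1$; I claim $\SEP(D)=n-1$. Indeed, if $\SEP(D)\le n-2$, then $\id(D)\le\SEP(D)+1\le n-1<n$, a contradiction. So $\SEP(D)=n-1$. Now let $C$ be any minimum separating code, $|C|=n-1$. If $C$ were dominating, it would be an identifying code of size $n-1<n$, contradicting $\id(D)=n$. Hence $C$ is not dominating, i.e.\ some vertex of $D$ is not dominated by $C$. Since $C$ was an arbitrary minimum separating code, this holds for every minimum separating code, which is exactly the claim.

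\smallskip

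I do not expect a serious obstacle here: the statement is essentially a repackaging of the two inequalities $\SEP\le\id\le\SEP+1$ and $\SEP(D)\le n-1$, and the argument is a short case analysis in both directions. The only point that needs a little care is making sure, in the ``if'' direction, that a hypothetical identifying code of size $n-1$ is forced to be a \emph{minimum} separating code (so that the hypothesis about missed vertices applies to it) — but this is immediate from $\SEP(D)=n-1$. One should also note in passing that ``$\SEP(D)=|V(D)|-1$'' already follows from ``$\id(D)=|V(D)|$'' alone via the two inequalities, so the hypothesis in the statement is stated for emphasis rather than logical necessity; I would remark on this only if the authors wish.
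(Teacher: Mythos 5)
Your proof is correct and takes essentially the same approach as the paper: both directions rest on Proposition~\ref{PropositionNMinusOne} together with the inequality $\SEP(D)\le\id(D)\le\SEP(D)+1$. You merely spell out the ``if'' direction (observing that a hypothetical identifying code of size $|V(D)|-1$ must be a minimum separating code) more explicitly than the paper's one-line version, which is a reasonable elaboration rather than a different argument.
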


\begin{proof}
Suppose $\id(D)=|V(D)|$. By Proposition~\ref{PropositionNMinusOne} we know that $\SEP(D)\leq |V(D)|-1$.
On the other hand, $\id(D)\leq \SEP(D)+1$ thus $\SEP(D)=|V(D)|-1$. Now, if all the vertices where dominated by
some minimum separating code of $D$, that code would also be identifying, a contradiction.

Conversely, if $\SEP(D)=|V(D)|-1$ and for any minimum separating code of $D$, there is a vertex
which is not dominated, we are forced to take all the vertices in any identifying code in order
to get the domination property.
\qed\end{proof}

The following theorem shows that the family of Definition~\ref{def:family} is
exactly the class of finite digraphs in which the whole vertex set is
the only identifying code.

\begin{theorem}\label{M(D)=|V(D)|}
Let $D$ be a finite twin-free digraph. If  $\id(D)=|V(D)|$ then $D \in (K_1, \oplus, \dir\triangleleft )$.
\end{theorem}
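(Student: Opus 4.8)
The plan is to prove the contrapositive-style statement by induction on $|V(D)|$, showing that if $D$ is a finite twin-free digraph with $\id(D) = |V(D)|$, then $D$ decomposes via $\oplus$ or $\dir\triangleleft$ into strictly smaller pieces that again satisfy the hypothesis. The base case $|V(D)| = 1$ is $K_1$ itself, which lies in $(K_1, \oplus, \dir\triangleleft)$. For the inductive step, I would first dispose of the disconnected case: if $D = D_1 \oplus D_2$ with both parts nonempty, then one checks that $\id(D_1) = |V(D_1)|$ and $\id(D_2) = |V(D_2)|$ (otherwise, combining smaller identifying codes of the parts—using that a separating code of a disconnected digraph is a union of separating codes of the components, and being careful about the at-most-one-undominated-vertex bookkeeping via Proposition~\ref{Eq:idcode-sepcode}—would yield an identifying code of $D$ of size $< |V(D)|$), so we may apply induction to each and conclude $D \in (K_1, \oplus, \dir\triangleleft)$.

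So assume $D$ is connected with $n = |V(D)| \ge 2$. The key is to find a vertex $x$ that is a source dominating all other vertices, i.e.\ $\dir{xv} \in A(D)$ for every $v \ne x$; then $D = x \dir\triangleleft (D - x)$, and it remains to verify $D - x$ is twin-free with $\id(D-x) = |V(D-x)|$ so induction applies. To locate such an $x$, I would use Proposition~\ref{Eq:idcode-sepcode}: $\SEP(D) = n-1$, and every minimum separating code leaves some vertex undominated. Consider, for each vertex $x$, the set $C_x = V(D) - x$. This is a separating code exactly when $x$ is not twinned-after-removal with anyone, and it fails to be dominating exactly when $x$ is a source with $B_1^+(x) = \{x\}$... but more usefully, I would run the Bondy-theorem machinery: applying Bondy's theorem to $\mathcal A = \{B_1^+(x_i)\}$ and the ground set $V(D)$ gives some $x$ with $V(D) - x$ separating; Proposition~\ref{Eq:idcode-sepcode} then forces a vertex $y$ undominated by $V(D) - x$, which means $B_1^+(y) \subseteq \{x\}$, hence $\dir{xy} \in A(D)$ (as $D$ is twin-free, $y$ can't be isolated unless $n=1$) — in fact $B_1^+(y) = \{x\}$. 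Now I would push this: the vertex $y$ with out-ball $\{x\}$ is very restrictive, and by analyzing which vertices can serve as the Bondy element $x$ (the "good" vertices), combined with connectivity and the extremal hypothesis, I want to conclude that $x$ is in fact a source that dominates everything — equivalently, that $\dir{F}(D)$ has a single root $x$ whose tree is spanning, matching Observation~\ref{TransitveClosuresOfTrees} and Proposition~\ref{TreeRepresentation}.

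The main obstacle is exactly this last passage: going from "there exists one Bondy-element $x$ producing one undominated vertex $y$" to the strong structural conclusion that $x$ dominates all of $V(D)$ and $D - x$ still has the extremal property. I expect this requires a careful case analysis of which vertices $x$ give a minimum separating code $V(D)-x$ and arguing that if no such $x$ is a universal source then one can always build a separating code that dominates everything (contradicting Proposition~\ref{Eq:idcode-sepcode}); here the transitive-closure-of-a-forest picture from Observation~\ref{TransitveClosuresOfTrees} should be the guiding structure, and the technical heart is checking that removing the root $x$ preserves twin-freeness (a twin pair $\{u,v\}$ in $D-x$ must have had $x$ in exactly one of $B_1^+(u), B_1^+(v)$ in $D$, and one shows this is incompatible with $x$ being a source that dominates both) and preserves $\id = |V|$ (via Proposition~\ref{PropositionForInduction} and Proposition~\ref{Eq:idcode-sepcode} applied to $D-x$). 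Once the decomposition step is established in both the connected and disconnected cases, the induction closes.
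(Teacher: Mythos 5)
There is a genuine gap, and also a concrete error in one of the steps you do carry out. The error first: for the code $C_x=V(D)-x$, the only vertex that can fail to be dominated is $x$ itself, because the paper's balls $B_1^+(u)$ contain $u$, so every $u\neq x$ dominates itself. Hence Proposition~\ref{Eq:idcode-sepcode} does not give you a vertex $y\neq x$ with $B_1^+(y)=\{x\}$ and an arc $\dir{xy}$; it gives you that $x$ is a source with $B_1^+(x)=\{x\}$. (This conclusion actually points in the direction you want --- any $x$ for which $V(D)-x$ separates must be a source --- but your stated inference is wrong as written.)

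The gap is the step you yourself flag as ``the main obstacle'': passing from ``some source $x$ yields a minimum separating code $V(D)-x$'' to ``$x$ is a universal source, i.e.\ $D=x\dir\triangleleft(D-x)$.'' You describe this as requiring ``a careful case analysis'' and express the expectation that it works, but you never produce the argument, and it is precisely where all the content of the theorem lies. The paper resolves it quite differently: it takes a minimal counterexample and splits on whether $D$ has a sink. If some vertex $x$ has no outneighbours, it peels off that \emph{sink} (not the root), notes $D-x$ is twin-free and extremal by Proposition~\ref{PropositionForInduction}, and reattaches $x$ as a child of the unique vertex $y$ it is not separated from. If every vertex has an outneighbour, it shows that $\dir{F}(D-x)$ must be a single path with unique leaf $y$, and then explicitly exhibits an identifying code of size $|V(D)|-1$ (either $V(D)-t$ for a suitable $t\in N^-(x)$, or $V(D)-r$ for the root $r$), contradicting extremality. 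Your proposal contains no counterpart to this construction, so the induction does not close. The disconnected reduction and the twin-freeness of $D-x$ for a universal source $x$ are fine, but they are the easy parts.
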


\begin{proof}
Assume $D$ is the smallest digraph for which $\id(D)=|V(D)|$ but 
$D \notin (K_1, \oplus, \dir\triangleleft )$.
We consider two cases:

\textbf{Case a.} Assume that there exists a vertex $x$ of $D$ such that $x$ has no outneighbours.
Then $D-x$ is a twin-free graph. By Proposition~\ref{PropositionForInduction} we have 
$\id(D-x)=|V(D-x)|$ and, therefore, by the minimality of $D$, 
$D-x \in  (K_1, \oplus, \dir\triangleleft )$. Thus $\dir{F}(D-x)$ is well-defined. 
Since $V(D)-x$ is not an identifying code of $D$, in $V(D)-x$ either there is a vertex $y$ which is not separated
from $x$, or $x$ is not dominated.

If $x$ is not dominated, then $x$ is an isolated vertex and, therefore,
$D$ is the disjoint union of two members of $(K_1, \oplus, \dir\triangleleft )$, 
hence $D$ is also in the family. So there is a vertex $y$ which is not separated
from $x$. Therefore, $N^+(x)=N^+(y)\cup\{y\}$. Then $D$ is the transitive closure of
the oriented tree built from $\dir{F}(D-x)$ by adding $x$ as a child of $y$, a contradiction.

\textbf{Case b.} Every vertex has at least one outneighbour.
By Proposition~\ref{PropositionNMinusOne} we know that there exists a vertex $x$ such that $D-x$ is twin-free
and as in the previous case $\dir{F}(D-x)$ is well-defined.

Since every vertex, in particular, every leaf $t$ of $\dir{F}(D-x)$ has an outneighbour in $D$, we must have 
$\overrightarrow{tx}\in A(D)$. Thus $d^+(x)\geq 1$ and, therefore, $V(D)-x$ is a dominating set. But since
it is not an identifying code there is a vertex $y\neq x$ which is not separated from $x$ by $V(D)-x$, i.e.,
$N^+(x)=N^+(y)\cup\{y\}$.

We now claim that $y$ is the only leaf of $\dir{F} (D-x)$. That is
because if $t\neq y$ is a leaf then $t\in N^+(x)$ so $t\in N^+(y)$
which is a contradiction. But there has to be at least one leaf in
$\dir{F} (D-x)$ thus $y$ is the only leaf in $\dir{F} (D-x)$ and,
therefore, $\dir{F}(D-x)$ is a path.  Now since $d^-(x)>0$, there is a
vertex in $N^-(x)$. We have $y\notin N^-(x)$ since otherwise we would
have $N^+(x)=N^+(y)$. 

First, assume that there exists a vertex $t\in N^-(x)$ such that the
father of $t$ in $\dir{F} (D-x)$ is not in $N^-(x)$. We claim that
$C=V(D)-t$ is an identifying code. Indeed, $x$ is the only vertex
dominated by all vertices of $C$. Vertex $t$ and its father are
separated by $x$. Finally, each other pair of vertices from $V(D)-x$
is separated by the one which is a descendant of the other in $\dir{F}
(D-x)$.

Now, assume that there is no vertex in $N^-(x)$ with its father in
$\dir{F} (D-x)$ not belonging to $N^-(x)$. Let $r$ be the root of
$\dir{F} (D-x)$. In particular, $r\in N^-(x)$. We claim that
$C=V(D)-r$ is an identifying code. Indeed, $x$ is the only vertex
dominated by all vertices of $C$. Each pair of vertices from $V(D)-x$
is separated by the one which is a descendant of the other in $\dir{F}
(D-x)$. Finally, $r$ is the only vertex which is dominated only by
$x$.
\qed\end{proof}

Noticing that digraphs in  $(K_1, \oplus, \dir\triangleleft )$ have no symmetric arcs, we have the
following extension of Theorem~\ref{thm:existence}:

\begin{corollary}
Let $D$ be a finite twin-free digraph having some symmetric arcs. Then $\id(D)\leq |V(D)|-1$.
\end{corollary}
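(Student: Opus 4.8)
The plan is to obtain this as an immediate contrapositive of Theorem~\ref{M(D)=|V(D)|}, using the fact that the whole family $(K_1,\oplus,\dir\triangleleft)$ consists of oriented graphs. First I would record the trivial upper bound $\id(D)\le|V(D)|$: since $D$ is twin-free, $C=V(D)$ is an identifying code (alternatively this follows from Proposition~\ref{PropositionNMinusOne} together with $\id(D)\le\SEP(D)+1$). So it suffices to rule out the equality $\id(D)=|V(D)|$.

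Suppose, for contradiction, that $\id(D)=|V(D)|$. Then Theorem~\ref{M(D)=|V(D)|} yields $D\in(K_1,\oplus,\dir\triangleleft)$, and by Observation~\ref{TransitveClosuresOfTrees} the digraph $D$ is the transitive closure of a rooted oriented forest $\dir{F}(D)$. Such a forest contains no directed cycle, and forming the transitive closure only adds an arc $\dir{uv}$ when there is already a directed path from $u$ to $v$; hence the presence of both $\dir{uv}$ and $\dir{vu}$ in $D$ would force directed paths from $u$ to $v$ and from $v$ to $u$ in $\dir{F}(D)$, i.e.\ a directed cycle through $u$ and $v$, which is impossible. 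Thus $D$ has no symmetric arc, contradicting the hypothesis. Therefore $\id(D)\neq|V(D)|$, and combined with $\id(D)\le|V(D)|$ we conclude $\id(D)\le|V(D)|-1$.

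There is essentially no obstacle here; the only point needing (easy) justification is the parenthetical remark preceding the corollary, namely that every member of $(K_1,\oplus,\dir\triangleleft)$ is symmetric-arc-free. I would either cite the rooted-forest representation as above, or give a one-line induction on the construction of the family: $K_1$ is acyclic, the operation $\oplus$ preserves acyclicity, and $x\dir\triangleleft(D)$ only adds arcs leaving the new source $x$, so it cannot create a directed cycle. Either way the corollary follows in a couple of lines.
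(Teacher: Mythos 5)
Your argument is correct and is exactly the paper's: the corollary is derived as the contrapositive of Theorem~\ref{M(D)=|V(D)|} together with the observation (which the paper states just before the corollary) that members of $(K_1,\oplus,\dir\triangleleft)$, being transitive closures of rooted oriented forests, have no symmetric arcs. Your justification of that observation is a fine filling-in of the detail the paper leaves implicit.
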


\section{Extremal infinite oriented graphs}\label{sec:infinite}

In this section, we consider the case of infinite digraphs in which the whole vertex set is the only identifying code.
To avoid set theoretic problems we only consider infinite graphs on a countable set of vertices.
Thus in this section the considered directed or oriented graphs have a finite or countable set of vertices.

By the characterization of the family of simple infinite graphs needing their whole vertex set
to be identifed in~\cite{FGKNPV10}, we already have a rich family of extremal symmetric digraphs with
respect to identifying codes. The family of all such directed graphs seems to be too rich to characterize.
In this section we provide such characterization for the class of all oriented graphs. We begin with the
following definitions:

A connected (and possibly infinite) oriented graph $D$ is a
\emph{finite-source transitive tree} ($fst$-tree for short) if:
\begin{enumerate}
\item[(1)] for each vertex $x$ of $D$, $B_1^+(x)$ induces 
the transitive closure of a finite directed path, $P_x$, with $x$ as its end vertex, and
\item[(2)] for each pair $x,y$ of vertices, there is a vertex $z\in V(D)$ such that $P_x\cap P_y=P_z$.
\end{enumerate}
Note that in an $fst$-tree $D$, since each path $P_x$ is finite,
point~(2) of the definiton implies that for any pair $x,y$ of
vertices, $P_x$ and $P_y$ begin with the same vertex. Hence all paths
$P_x$ begin with the same vertex, which is the unique source of $D$.

An oriented graph $D$ is an {\em infinite-source transitive tree} ($ist$-tree for short) 
if:
\begin{enumerate}
\item[(1)] for all vertices $x$ of $D$, $B_1^+(x)$ induces the
  transitive closure of an infinite path (we denote it by $P_x$), and
\item[(2)] for any pair $x,y$ of vertices of $D$, there is a vertex $z\in
  V(D)$ such that $P_x\cap P_y = P_z$.
\end{enumerate}
Note that an $ist$-tree has no source vertex but one can imagine
infinity as its source.

Finally we say that an oriented graph $D$ is a \emph{source transitive
  tree} if it is either an $fst$-tree or an $ist$-tree, see
Figure~\ref{FST-ISTtrees} for examples. For each pair $x,y$ of
vertices of a source transitive tree $D$, paths $P_x$ and $P_y$ share
the vertices of a path $P_z$ which includes the ``beginning'' of both
$P_x$ and $P_y$. Hence, all arcs of $D$ can be oriented in the same
direction. Moreover, there cannot be any cycle in the union of all
paths $P_x$. This implies that each source transitive tree $D$ is the
transitive closure of a finite or infinite ``rooted'' oriented tree
(even if an $ist$-tree has no properly defined root vertex, one can
regard infinity as its root) which we call the \emph{underlying tree}
of $D$.  Notice that the collection of $fst$-trees on a finite set of
vertices is exactly the set of connected elements of $(K_1, \oplus,
\dir\triangleleft )$.

\begin{figure}[ht]
\begin{center}
\resizebox{8cm}{!}{\input{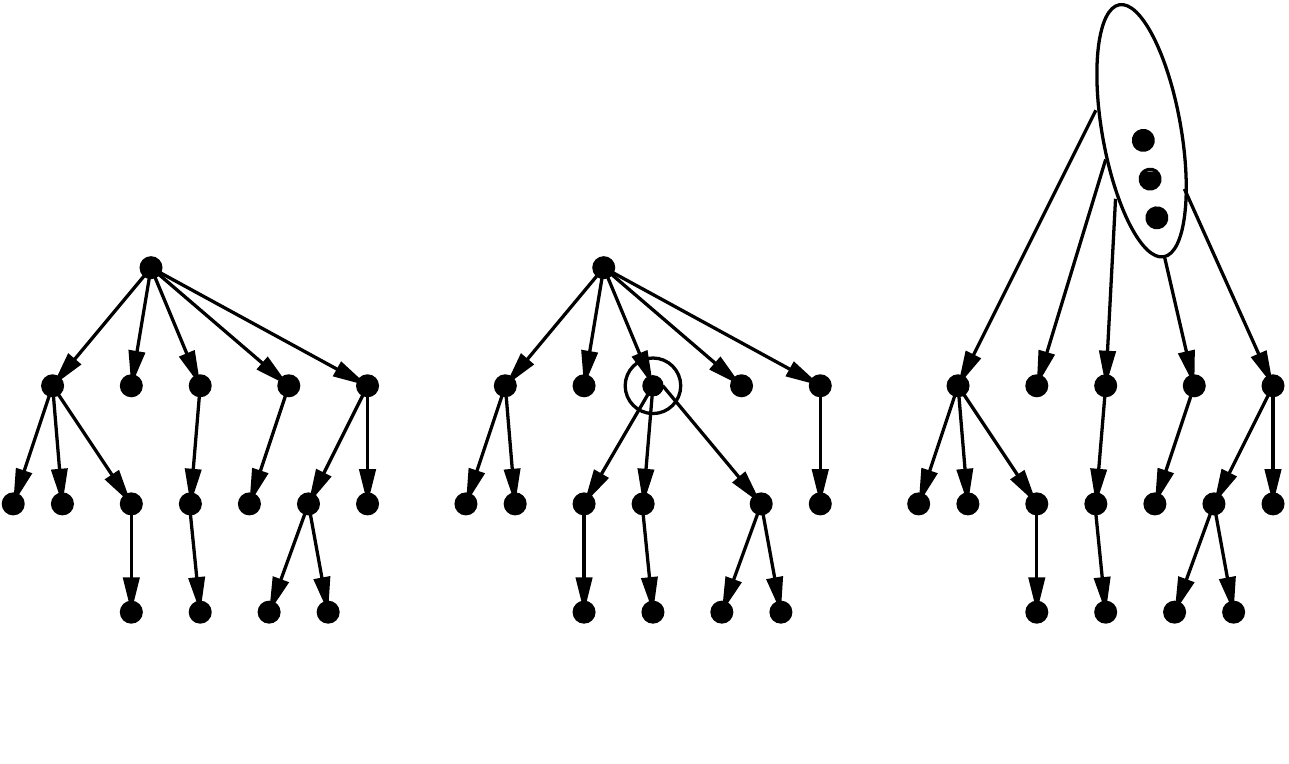_t
}}
\end{center}
\caption{Underlying trees of source transitive trees}\label{FST-ISTtrees}
\end{figure}

\begin{proposition}\label{prop:tree}
The only identifying code of a source transitive tree is its whole set of vertices.
\end{proposition}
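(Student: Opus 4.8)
The plan is to show directly that any identifying code $C$ of a source transitive tree $D$ must contain every vertex, by examining each vertex $x$ and using the structure of the paths $P_x$. Recall that $D$ is the transitive closure of its underlying tree, so for a vertex $x$, its out-ball $B_1^+(x)$ is exactly the vertex set of $P_x$, which is the path from the source (or from ``infinity'') down to $x$; in particular $B_1^+(x)$ is finite and totally ordered by the path, with $x$ as its last vertex, and the in-ball $B_1^-(x)$ consists of $x$ together with all descendants of $x$ in the underlying tree. The father of $x$ (when it exists) is the vertex immediately preceding $x$ on $P_x$, and it is the unique vertex $p$ with $B_1^+(x)=B_1^+(p)\cup\{x\}$.

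First I would handle separation. Suppose some vertex $x$ is not in $C$; I claim $x$ cannot be separated from its father $p$ (if $x$ is the source it has no father, which I treat at the end). Since $B_1^+(x)=B_1^+(p)\cup\{x\}$, the sets $B_1^+(x)\cap C$ and $B_1^+(p)\cap C$ differ only possibly in whether $x$ lies in $C$; as $x\notin C$ they are equal, so $x$ and $p$ are not separated by $C$, contradicting that $C$ is a separating code. Hence every non-source vertex of $D$ lies in $C$. This argument is the analogue of the observation in Proposition~\ref{TreeRepresentation} that every non-root vertex of the forest must be in any identifying code; here point~(1) of the definition of $fst$- and $ist$-trees is exactly what guarantees the father–child relation behaves this way.

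It remains to deal with the source $s$, in the $fst$-case (an $ist$-tree has no source, so there the previous paragraph already finishes the proof). Here I use domination: since $D$ is connected and, being an $fst$-tree, has at least one arc out of $s$ unless $D=K_1$, but more to the point $s$ has no incoming arc, so $B_1^+(s)=\{s\}$. Thus $B_1^+(s)\cap C\neq\emptyset$ forces $s\in C$. Combining the two parts, $C=V(D)$. (When $D=K_1$ the statement is trivial, $V(D)$ being the only nonempty subset.) The main subtlety — really the only one — is to make sure the structural facts about $B_1^+$ and $B_1^-$ that I use are legitimately read off from the definitions: that $B_1^+(x)=V(P_x)$ with $x$ terminal, that the father is the penultimate vertex of $P_x$, and that $s$ is a genuine source with empty in-neighbourhood. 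All of these follow from the remarks already made in the excerpt (that a source transitive tree is the transitive closure of a rooted oriented tree, with infinity playing the role of the root in the $ist$-case and the unique source playing it in the $fst$-case), so no essential difficulty remains; the proof is short once these are in hand.
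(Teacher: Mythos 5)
Your proof is correct and follows essentially the same route as the paper's: a non-source vertex $x$ must belong to any code in order to be separated from its father $p$ (since $B_1^+(x)=B_1^+(p)\cup\{x\}$), and a source must belong in order to be dominated. The only slip is the parenthetical claim that $B_1^+(x)$ is finite, which fails for an $ist$-tree, but that fact is never used in your argument.
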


\begin{proof}
Let $D=(V,A)$ be a source transitive tree and $x$ be any vertex of $D$. 
If $x$ is a source (it can only happen if $D$ is an $fst$-tree),
then $x$ must be in any identifying code of $D$ in order to be dominated. If $x$ is not a source,
then to separate $x$ from its father in the underlying tree of $D$, $x$ itself must be in any identifying
code.
\qed\end{proof}

We are now ready to build the whole family of oriented graphs that need their whole vertex
set to be identified. To this end given any oriented graph $H$ we first build the family $\Psi(H)$
of extremal oriented graphs as follows:

For each vertex $x$ of $H$ if $x$ is a source of $H$, then we assign an $fst$-tree $T_x$ to $x$.
If $x$ is not a source, then we assign an $ist$-tree $T_x$ to $x$. The choice of $T_x$ is free but each
$T_x$ has its distinct set of vertices. For each arc $\overrightarrow{xy}$ of $H$ we also associate
a subset $V_{ \overrightarrow{xy}}$ of $V(T_x)$ (the choice of $V_{ \overrightarrow{xy}}$ is also free).
We now build a member of $\Psi(H)$ by taking $\cup V(T_x)$ as the vertex set, arcs of $T_x$ are also
arcs of the new graph and, furthermore, for any $z \in V_{ \overrightarrow{xy}}$ and any $t\in V(T_y)$, we
add an arc $\overrightarrow{zt}$.

\begin{proposition}\label{prop:constructioninfini}
Given an oriented graph $H$, any digraph $D$ in $\Psi(H)$ can only be identified by its
whole vertex set.
\end{proposition}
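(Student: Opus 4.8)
The plan is to reduce the claim to Proposition~\ref{prop:tree} (the case of a single source transitive tree) by a vertex-by-vertex argument. Fix $D \in \Psi(H)$ and an arbitrary vertex $v$ of $D$; say $v \in V(T_x)$ for the appropriate vertex $x$ of $H$. I want to show $v$ must lie in every identifying code $C$ of $D$. The key structural observation is that adding arcs \emph{out} of $V(T_x)$ (namely the arcs $\overrightarrow{zt}$ for $z \in V_{\overrightarrow{xy}}$, $t \in V(T_y)$) never changes the in-neighbourhood $B_1^+(u)$ of a vertex $u \in V(T_x)$: all new arcs incident to $T_x$ point \emph{away} from $T_x$ (toward the $T_y$'s for out-neighbours $y$ of $x$ in $H$), so for $u \in V(T_x)$ we have $B_1^+(u) \cap V(T_x) = B_1^+(u)$ computed in $D$, and this equals the incoming ball of $u$ inside the source transitive tree $T_x$. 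In other words, the restriction of the ``is separated from'' and ``is dominated by'' relations to $V(T_x)$ is exactly as in the stand-alone tree $T_x$.

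Next I would split on whether $x$ is a source of $H$. If $x$ is a source, $T_x$ is an $fst$-tree and has a unique source vertex $s$; moreover no arc of $D$ enters $V(T_x)$ from outside (since $x$ has no in-neighbour in $H$). Now run the argument of Proposition~\ref{prop:tree} inside $T_x$: if $v$ is the source $s$ of $T_x$ then $v$ has no in-neighbour in all of $D$, so $v$ must be in $C$ to be dominated; if $v$ is not the source of $T_x$, then $v$ and its father $f$ in the underlying tree of $T_x$ satisfy $B_1^+(v) \cap V(T_x) = (B_1^+(f)\cap V(T_x)) \cup \{v\}$, and since no vertex outside $V(T_x)$ separates $v$ from $f$ either (neither is an out-neighbour of anything outside $V(T_x)$ in a way that distinguishes them — in fact nothing outside $V(T_x)$ has an arc to $v$ or to $f$ at all), the only vertex that can separate $v$ from $f$ is $v$ itself, forcing $v \in C$.

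If $x$ is not a source of $H$, then $T_x$ is an $ist$-tree and has no source vertex at all, so \emph{every} $v \in V(T_x)$ has a father $f$ in the underlying tree of $T_x$, and again $B_1^+(v)\cap V(T_x) = (B_1^+(f)\cap V(T_x))\cup\{v\}$ while no vertex outside $V(T_x)$ distinguishes $v$ from $f$ (any vertex $z$ outside $V(T_x)$ with an arc into $V(T_x)$ lies in some $V_{\overrightarrow{x'x}}$ and then has arcs to \emph{every} vertex of $T_x$, hence to both $v$ and $f$, so it never separates them). Therefore $v \in C$. Since $v$ was arbitrary, $C = V(D)$, and conversely $V(D)$ is trivially an identifying code (every vertex is dominated because it lies in its own outgoing ball, and all in-neighbourhoods are distinct as $D$ is an oriented graph, hence twin-free).

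The step I expect to be the main obstacle — and the one deserving the most care in the write-up — is the verification that vertices \emph{outside} $V(T_x)$ can never help separate $v$ from its father $f$, i.e. that the ``father trick'' of Proposition~\ref{prop:tree} survives the gluing. This hinges on the fact that a vertex $z \in V_{\overrightarrow{x'x}}$ (the only kind of outside vertex pointing into $T_x$) is joined to \emph{all} of $T_x$, so it contributes equally to $B_1^+(v)$ and $B_1^+(f)$; once this is stated cleanly the rest is a direct transcription of the proof of Proposition~\ref{prop:tree}. A secondary point to check is well-definedness of ``the father of $v$ in the underlying tree of $T_x$'', which is exactly what the definition of source transitive tree (together with the remark in the excerpt that each such tree is the transitive closure of a rooted oriented tree) provides.
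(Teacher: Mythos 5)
Your proof is correct and follows essentially the same route as the paper's: sources must lie in any code in order to be dominated, and every non-source vertex must lie in the code in order to be separated from its father in the underlying tree of its $T_x$, because any in-neighbour coming from outside $V(T_x)$ is joined to all of $T_x$ and hence cannot separate the two. (One small wording issue: the claim ``$B_1^+(u)\cap V(T_x)=B_1^+(u)$ computed in $D$'' in your opening paragraph is literally false when $x$ has an in-neighbour in $H$; the correct statement, which your later case analysis uses, is that $B_1^+(u)\cap V(T_x)$ equals the incoming ball of $u$ in the stand-alone tree $T_x$.)
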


\begin{proof}
The sources of $D$ are exactly the sources of the $fst$-trees $T_x$ for source-vertices $x$ of $H$
and need to be in any identifying code in order to be dominated.
If a vertex $u$ of $D$ is not a source, then it is in an $fst$ or $ist$-tree $T_x$ and there is,
like in the proof of Proposition~\ref{prop:tree}, a vertex $v$ of $T_x$ such that
$B_1^+(u)\cap V(T_x)=(B_1^+(v)\cap V(T_x))\cup\{u\}$ ($v$ simply is the father of $u$ in the underlying tree of $T_x$).
By our construction, any incoming neighbour of $u$ not in $T_x$ is also an incoming neighbour of
$v$ so  $B_1^+(u)=B_1^+(v)\cup\{u\}$ and $u$ must be in any identifying code of $D$.
\qed\end{proof}

\begin{theorem}\label{thm:orientedinfinite} 
Let $D$ be an infinite twin-free oriented graph. Then
a proper subset of $V(D)$ identifies all pairs of vertices of $D$ unless
$G\in \Psi(H)$ for some finite or infinite oriented graph $H$.
\end{theorem}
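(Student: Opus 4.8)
The plan is to prove the contrapositive: assuming that $D$ is an infinite twin-free oriented graph in which $V(D)$ is the only identifying code, I would construct a finite or infinite oriented graph $H$ such that $D\in\Psi(H)$. The starting point is to understand locally, around each vertex, what the condition ``$V(D)$ is the only identifying code'' forces. Since removing a single vertex $u$ from $V(D)$ must fail to be an identifying code, for each $u$ either $u$ becomes undominated or some other vertex becomes indistinguishable from $u$; this dichotomy (the infinite analogue of Proposition~\ref{Eq:idcode-sepcode}) is the engine of the whole argument, and I would want to leverage Proposition~\ref{PropositionForInduction} together with a compactness / finite-subset argument to transfer the finite structural result Theorem~\ref{M(D)=|V(D)|} to finite pieces of $D$.

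The core step is to define the right equivalence relation on $V(D)$ whose classes become the trees $T_x$. Two vertices $u,v$ should be put in the same class roughly when one can reach the other through a chain of ``forced'' separations entirely internal to a tree-like region — i.e., when they lie in a common source-transitive-subtree structure. Concretely, I would look at the out-balls: the failure of $V(D)-u$ to separate $u$ from $v$ forces $B_1^+(u)=B_1^+(v)\cup\{u\}$ (after possibly swapping, as in Case~b of Theorem~\ref{M(D)=|V(D)|}), which makes $v$ the ``father'' of $u$. Iterating the father relation, and showing it is acyclic and that each fiber is a source transitive tree, is the heart of the construction. For this I would verify the two defining properties of $fst$/$ist$-trees: that $B_1^+(x)$ inside a class induces the transitive closure of a (finite or infinite) directed path, and that out-balls of two vertices in a class intersect in another such ball — both of which should follow from repeatedly applying the local dichotomy and from the fact that $D$ is twin-free and oriented (so no symmetric arcs and no infinite descending ``father'' chains that cycle). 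Whether a class is an $fst$-tree or an $ist$-tree is dictated by whether it contains a source of $D$, which in turn dictates whether $x$ is a source of $H$ — matching the construction of $\Psi(H)$ exactly.

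Having the classes $\{T_x\}$ in hand, I would define $H$ to have one vertex per class, with an arc $\overrightarrow{xy}$ whenever some vertex of $T_x$ sends an arc into $T_y$. Two things then need checking: first, that $H$ is itself an oriented graph (no symmetric pair $\overrightarrow{xy},\overrightarrow{yx}$), which should come from the acyclicity of the global father/reachability order; and second, the crucial uniformity statement — that if any vertex $z\in V(T_x)$ has an arc to some $t\in V(T_y)$, then $z$ has an arc to every vertex of $T_y$. This uniformity is what lets me recover the sets $V_{\overrightarrow{xy}}$ and conclude $D\in\Psi(H)$. I expect this uniformity to be the main obstacle: it has to be extracted purely from the ``$V(D)-u$ is never identifying'' hypothesis, presumably by arguing that if $z$ reached some but not all of $T_y$, one could delete a carefully chosen vertex and still identify everything, contradicting extremality — exactly the style of the endgame in Theorem~\ref{M(D)=|V(D)|}, but now complicated by the need to handle infinitely many vertices and the two kinds of trees simultaneously. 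A secondary subtlety is making sure the equivalence classes are genuinely ``maximal'' tree-regions, so that no cross-class arcs accidentally create extra forced separations that the $\Psi(H)$ construction would not allow; handling that cleanly is where I would spend the most care.
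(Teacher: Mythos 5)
Your proposal follows essentially the same route as the paper: the local dichotomy (remove $x$, then either $x$ is undominated and hence a source, or the oriented-graph hypothesis forces $B_1^+(x)=B_1^+(x_{-1})\cup\{x\}$ for a unique father $x_{-1}$), the iterated father relation producing paths $P_x$, the equivalence $x\equiv y \Leftrightarrow P_x\cap P_y\neq\emptyset$ whose classes are exactly the $fst$- and $ist$-trees, and the quotient graph $H$ on these classes. The one point where you overestimate the difficulty is the uniformity of cross-class arcs: no separate extremality/deletion argument is needed, because it falls out of the ball equalities already established --- if $u\notin T_y$ is an in-neighbour of $t\in T_y$, then $u\in B_1^+(t)\setminus\{t,t_{-1},\dots\}\subseteq B_1^+(t_i)$ where $P_t\cap P_{t'}=P_{t_i}$, and $B_1^+(t_i)\subseteq B_1^+(t')$, so $u$ is an in-neighbour of every $t'\in T_y$; this is precisely how the paper dispatches it, and it also settles your remaining worries (that $H$ has no symmetric arcs, and that no stray cross-class arcs spoil the $\Psi(H)$ structure). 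The compactness/finite-subset transfer of Theorem~\ref{M(D)=|V(D)|} you mention at the outset is not needed and is not used by the paper.
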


\begin{proof}
Let $D$ be an infinite oriented graph that needs its whole vertex set to be identified.
Let $x$ be a vertex of $D$. The set $V(D)-\{x\}$ is not an identifying code.
Either $x$ is not dominated and so, $x$ is a source or there is a pair of vertices, say $u$ and $v$,
such that $B_1^+(v)=B_1^+(u)\cup\{x\}$. If $x\neq v$, we must have $\dir{uv} \in A$ and $\dir{vu} \in A$.
Since $D$ has no symmetric arc, this is not possible so, necessarily, $x=v$, $B_1^+(x)=B_1^+(u)\cup\{x\}$
and $u$ is the only vertex such that $B_1^+(x)=B_1^+(u)\cup\{x\}$.
So for any vertex $x$ of $D$ which is not a source there is a unique vertex we call $x_{-1}$ such that
$B_1^+(x)=B_1^+(x_{-1})\cup\{x\}$. We may repeat this argument on $x_{-i}$ to get $x_{-i-1}$ for $i=1,2, \cdots $
as long as $x_{-i}$ is not a source. This will result in a well defined set $\{ \cdots, x_{-i}, \cdots, x_0=x\}$
which induces a transitive closure of a finite or infinite path, which we denote by $P_x$
(if $x$ is a source itself, then $P_x=\{x\}$).

Assume that for two vertices $x$ and $y$, $P_x\cap P_y \neq \emptyset$. Then let $x_i$ be the
first (in the order defined by the path $P_x$) vertex of $P_x$ in $P_x\cap P_y$. We have $P_x\cap P_y=P_{x_i}$.

We now define an equivalence relation on the vertices of $D$: $x\equiv y$ if and only if $P_x\cap P_y \neq \emptyset$.
This gives us the equivalence class of $x$: $T_x=\{y \in V(D)\vert x\equiv y\}$. 
The set of vertices of $T_x$ induces either an $fst$-tree or an $ist$-tree in $D$.
Furthermore, if $u\notin T_x$, then $\overrightarrow {uv}$ is an arc of $D$ for either every $v\in V(T_x)$
or no $v \in V(T_x)$.
In fact, if there is an arc $\dir{uy}$ in $D$ for $y\in T_x$, then if $z \in T_x$ we have $P_y\cap P_z=P_{y_{i}}$. 
But $B_1^+(y)=B_1^+(t)\cup \{y,y_{-1},...,y_{i+1}\}$
so $u\in B_1^+(t)$. Now, $B_1^+(t)\subset B_1^+(z)$ so $u\in B_1^+(z)$ and $\dir{uz}$ is also an arc of $D$.

We construct a graph $H$ as follows: the vertices of $H$ are the equivalence classes $T_x$ and there is an arc 
$\overrightarrow{T_xT_y}$ if there is an arc $\overrightarrow{uv}$ of $D$ such that $u\in V(T_x)$ and $v\in V(T_y)$.
It is now clear that $D\in \Psi(H)$. 
\qed\end{proof}

We conclude this section by the following remarks:

\begin{itemize}
\item This proof also works for the classification of finite oriented graphs with $\id(D)=|V(D)|$.
But the characterization of Theorem~\ref{M(D)=|V(D)|} is for all digraphs, thus it is a stronger statement.

\item The oriented graphs for which the only separating set is their whole vertex set are 
graphs in $\Psi(H)$ as long as $H$ has no source vertex.
\end{itemize}

\section{An application to Bondy's Theorem}\label{sec:Bondy}

In this section, unless specifically mentioned, a set system is a pair
$(\mathcal A, X)$ with $X$ being any set of size $n$
and $\mathcal A$ being a collection of $n$ distinct subsets of $X$.
When applying Bondy's theorem to a set system $(\mathcal{A},X)$ where all subsets in 
$\mathcal{A}$ are distinct \emph{and nonempty}, it is a natural request to be able to
choose an element $x$ of $X$ such that no subset $A_i-x$ is the empty set.
This is not always possible, just consider the set system ${\mathcal A}$ consisting of all
singletons of $X$. Such set systems will be called \emph{extremal}. More precisely,
an extremal set system is a set system $ (\mathcal{A},X)$ in which elements of
$\mathcal{A}$ are all distinct and nonempty, and where for any element $x$
of $X$ either there is an element $A_i\in \mathcal{A}$ with $A_i-x=\emptyset$
or there is a pair $A_i, A_j\in \mathcal{A}$ such that $A_i-x=A_j-x$.
In this section we characterize all such extremal cases.

We would like to mention that almost any proof of Bondy's theorem (e.g., see \cite{B72,B86,BF92,L79})
works for an extension
of this theorem in which we are allowed to have more elements in $X$ than $\mathcal A$.
We then look for a subset $X'\subset X$ of size $|X|-|\mathcal A|+1$  such that all the induced
sets $A_i-X'$ are distinct. The following proposition is now an easy consequence of this
general version of Bondy's theorem.

\begin{proposition}\label{emptyset}
Let $(\mathcal A,X)$ be a set system with $\vert X \vert > \vert \mathcal A \vert$ 
where all the subsets in $\mathcal A$ are distinct and nonempty. 
There is a subset $X'$ of $X$  of size $\vert X \vert - \vert \mathcal A \vert$ such that
all the subsets $A\cap(X- X')$ for $A\in\mathcal A$ are nonempty and distinct.
\end{proposition}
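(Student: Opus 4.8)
The plan is to deduce Proposition~\ref{emptyset} from the general version of Bondy's theorem (the extension allowing $|X| > |\mathcal A|$) by a padding argument. Concretely, I would start from a set system $(\mathcal A, X)$ with $|X| = |\mathcal A| + k$ for some $k \geq 1$, where all members of $\mathcal A$ are distinct and nonempty, and I want to extract $X' \subseteq X$ with $|X'| = k$ so that the traces $A \cap (X - X')$ remain nonempty and distinct. The distinctness part is exactly what the general Bondy theorem gives: there is a subset $Y \subseteq X$ with $|Y| = |X| - |\mathcal A| + 1 = k+1$ such that the sets $A \cap (X - Y)$ are pairwise distinct. The only thing missing is control over nonemptiness, and the point is that removing $k+1$ elements may be one element too aggressive — so the idea is to throw back exactly one element to restore nonemptiness while keeping distinctness.

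The key steps, in order: First, invoke the general Bondy theorem to obtain $Y \subseteq X$ with $|Y| = k+1$ and all $A \cap (X-Y)$ distinct. Second, observe that among these $|\mathcal A|$ distinct traces, at most one can be empty (two distinct sets cannot both be $\emptyset$). Third, if none of the traces is empty, pick any $y \in Y$ and set $X' = Y - \{y\}$; adding $y$ back to the ground set only enlarges each trace, so nonemptiness is preserved and distinctness certainly cannot be destroyed by enlarging the window $X - X'$... wait, that is the subtle point and I will address it below. Fourth, if exactly one trace, say $A_{i_0} \cap (X - Y)$, is empty, then since $A_{i_0}$ itself is nonempty it must meet $Y$; choose $y \in A_{i_0} \cap Y$ and set $X' = Y - \{y\}$. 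Then $A_{i_0} \cap (X - X') \ni y$ is nonempty, and every other trace was already nonempty, so all traces over $X - X'$ are nonempty; it remains to argue distinctness is retained.

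The main obstacle is precisely the distinctness claim after putting one element back: enlarging the window from $X - Y$ to $X - X' = (X-Y) \cup \{y\}$ could in principle merge two previously-distinct traces only if... actually it cannot, because $A \cap (X-X') = (A \cap (X-Y)) \cup (A \cap \{y\})$, and if two sets $A_i \cap (X-X')$ and $A_j \cap (X-X')$ were equal then their intersections with $X-Y$ would also be equal, contradicting the distinctness we already have from Bondy. So enlarging the window never destroys distinctness — it only potentially destroys distinctness in the \emph{other} direction (shrinking does). Hence the argument is clean once phrased correctly, and the real content is entirely carried by the general Bondy theorem; the proposition is essentially a bookkeeping corollary. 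I would write it up as: apply general Bondy to get $Y$ of size $k+1$; note at most one trace is empty; if one is empty return the witnessing element, else return an arbitrary element; in both cases $X' = Y$ minus that element has size $k = |X| - |\mathcal A|$, all traces over $X - X'$ are nonempty by construction, and distinct because they extend the distinct traces over $X - Y$.
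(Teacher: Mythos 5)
Your proposal is correct and follows essentially the same route as the paper: apply the extended version of Bondy's theorem to get a set of size $|X|-|\mathcal A|+1$, note that at most one trace can be empty, and return one element of the offending set to the ground set. Your write-up is in fact slightly more careful than the paper's, since you explicitly handle the case where no trace is empty and spell out why enlarging the window $X-X'$ cannot destroy distinctness.
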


\begin{proof}
Let $X_0$ be the subset of size $\vert X \vert - \vert \mathcal A \vert+1$
found by extended version of Bondy's theorem i.e., all the subsets $A\cap(X- X_0)$ are distinct.
Thus there is at most one subset $A_0$ such that $A_0\cap(X- X_0)$ is empty. 
Let $x_0\in X_0$ be an element of $A_0$. Then $X'=X_0- \{x_0\}$ satisfies the proposition. 
\qed\end{proof}

To achieve our goal of characterizing the extremal set systems, we will need a few more definitions.
Given a set system $(\mathcal A,X)$ we define its \emph {incidence bipartite graph} 
$B(\mathcal A,X)$ to be the bipartite graph with $\mathcal A\cup X$ as its vertex set where $\mathcal A$ and
$X$ form the two parts and in which vertex $A_i\in\mathcal A$ is adjacent to vertex $x_j\in X$ if and only if
$x_j\in A_i$. Bondy's theorem can now be restated as follows:

\begin{theorem}
 Let $G=(S\cup T, E)$ be an $S$-identifiable bipartite graph with $|S|=|T|$.
Then there exists an $S$-separating code $C\subseteq T$ of size at most $|S|-1$.
\end{theorem}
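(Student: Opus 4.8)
The plan is to show that the restated theorem is just a dictionary translation of Bondy's original theorem (Theorem~\ref{Bondy}), passing through the incidence bipartite graph $B(\mathcal A,X)$, together with Proposition~\ref{emptyset} to handle the domination (nonemptiness) condition. First I would recall the correspondence: given an $S$-identifiable bipartite graph $G=(S\cup T,E)$ with $|S|=|T|=n$, set $X=T$ and, for each vertex $s\in S$, let $A_s=N(s)\subseteq T=X$. Being $S$-identifiable means $G$ admits an $S$-discriminating code, i.e. there is a subset of $T$ separating all pairs of $S$ and dominating every vertex of $S$; in particular no $A_s$ is empty, and the $A_s$ are pairwise distinct (otherwise two vertices of $S$ could never be separated). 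Hence $(\mathcal A=\{A_s : s\in S\},X)$ is a collection of $n$ distinct subsets of an $n$-set, exactly the hypothesis of Bondy's theorem.

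Next I would apply Bondy's theorem to get an element $x\in X=T$ such that the sets $A_s-x$ are all distinct. Translating back, $C:=T\setminus\{x\}$ is an $S$-separating code: for distinct $s,s'\in S$ we have $N(s)\cap C=A_s-x\neq A_{s'}-x=N(s')\cap C$. Clearly $|C|=n-1=|S|-1$, which is the desired bound. This already proves the statement, and the only subtlety is making sure the $S$-identifiability hypothesis is genuinely used — it is, in guaranteeing nonemptiness and distinctness of the $A_s$, which is what lets us invoke Bondy.

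I expect the main (minor) obstacle to be a bookkeeping point rather than a genuine difficulty: one must be careful that the conclusion we extract is an $S$-\emph{separating} code, not an $S$-\emph{discriminating} code — Bondy's theorem only promises distinctness of the induced sets $A_s-x$, and indeed one of them (namely $A_s-x$ for the unique $s$ with $A_s=\{x\}$, if such exists) may become empty, so $C=T\setminus\{x\}$ need not dominate all of $S$. This is exactly why the theorem statement asks only for an $S$-separating code. If instead one wanted the stronger conclusion under the extra hypothesis $|X|>|\mathcal A|$, that is precisely the content of Proposition~\ref{emptyset}, phrased in set-system language; so I would remark that Proposition~\ref{emptyset} is the bipartite-graph dictionary applied to the extended Bondy theorem, and that the present theorem is the $|S|=|T|$ borderline case where nonemptiness can fail. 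Finally I would note that the equivalence runs in both directions: any instance of Bondy's theorem arises this way, so the two statements are genuinely interchangeable, which justifies the phrase ``Bondy's theorem can now be restated as follows.''
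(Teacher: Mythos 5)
Your proposal is correct and matches the paper's intent exactly: the paper offers no separate proof, treating the statement as a direct restatement of Bondy's theorem via the incidence-bipartite-graph dictionary, which is precisely the translation you carry out. Your remark distinguishing separating from discriminating codes (since one set $A_s-x$ may become empty) correctly identifies why only a separating code is claimed, in line with the paper's later use of Proposition~\ref{emptyset}.
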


As in the case of set systems one can also define the \emph{incidence bipartite graph} of a digraph.
To this end given a digraph $D$ on a vertex set $\{x_1,x_2, \cdots, x_n\}$ we define $B(D)$
to be the bipartite graph on $S=\{x_1,x_2, \cdots ,x_n\}$ and $T=\{x'_1,x'_2, \cdots, x'_n\}$
with $x_i$ being adjacent to $x'_j$ if either $\overrightarrow{x_ix_j} \in A(D)$ or $i=j$.
The latter condition of adjacency implies that for a bipartite graph to be the incidence bipartite graph of some
digraph it must admit a perfect matching. Below we prove that this is also a sufficient condition.

\begin{lemma}\label{BipartiteToDigraph}
A bipartite graph $G=(S\cup T,E)$ is the incidence bipartite graph of some digraph $D$
if and only if $|S|=|T|$ and $G$ admits a perfect matching $\varphi:S\rightarrow T$.
\end{lemma}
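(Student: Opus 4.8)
The statement claims that a bipartite graph $G=(S\cup T,E)$ is the incidence bipartite graph of some digraph $D$ if and only if $|S|=|T|$ and $G$ admits a perfect matching $\varphi:S\to T$. The forward direction has essentially already been observed in the text: if $G=B(D)$ for a digraph $D$ on vertex set $\{x_1,\dots,x_n\}$, then $S$ and $T$ each have size $n$, and the condition ``$x_i$ adjacent to $x'_i$'' (coming from the case $i=j$) gives a perfect matching $x_i\mapsto x'_i$. So the work is all in the converse.

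For the converse, suppose $|S|=|T|=n$ and fix a perfect matching $\varphi:S\to T$. The plan is to use $\varphi$ to ``name'' the two sides consistently: write $S=\{x_1,\dots,x_n\}$ and relabel $T$ so that $\varphi(x_i)=x'_i$ for each $i$. Now define a digraph $D$ on the vertex set $\{x_1,\dots,x_n\}$ by putting the arc $\overrightarrow{x_ix_j}\in A(D)$ exactly when $x_i$ is adjacent to $x'_j$ in $G$ \emph{and} $i\neq j$. I would then verify directly from the definition of $B(D)$ that $G=B(D)$: by construction, for $i\neq j$ the vertex $x_i$ is adjacent to $x'_j$ in $B(D)$ iff $\overrightarrow{x_ix_j}\in A(D)$ iff $x_i$ is adjacent to $x'_j$ in $G$; and for $i=j$, the vertex $x_i$ is adjacent to $x'_i$ in $B(D)$ by the $i=j$ clause of the definition of incidence bipartite graph, and it is adjacent to $x'_i$ in $G$ because $\varphi(x_i)=x'_i$ is a matching edge. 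Hence $G$ and $B(D)$ have exactly the same edges, so $G=B(D)$.

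The one subtlety — and really the only place any care is needed — is that the definition of $B(D)$ forces every matching edge $x_ix'_i$ to be present, so one must check this does not clash with anything: it doesn't, precisely because we chose the labelling of $T$ to make $\varphi$ into the map $x_i\mapsto x'_i$, so the matching edges of $G$ are exactly the ``diagonal'' pairs. (If $G$ happened to contain the edge $x_ix'_i$ for some non-matching reason as well, that is harmless — there is at most one edge between $x_i$ and $x'_i$ in a simple bipartite graph, and it is accounted for by the $i=j$ clause regardless of whether we also want an arc $\overrightarrow{x_ix_i}$, which loops are not, so we simply do not add a loop.) I expect this bookkeeping about the diagonal to be the ``main obstacle,'' though it is very light; the rest is a direct unwinding of the two definitions.
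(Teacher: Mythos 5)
Your proof is correct and follows essentially the same route as the paper: the forward direction reads off the matching from the $i=j$ adjacencies, and the converse uses the matching to identify $T$ with a copy of $S$ and defines arcs via $x\varphi(y)\in E(G)$. Your version just spells out the verification that $G=B(D)$ in slightly more detail than the paper does.
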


\begin{proof}
It follows from the definition of the incidence bipartite graph of a digraph that $G$ must
have both parts of the same size and admits a perfect matching which matches the two copies of each vertex. 

Now, given a bipartite graph $G$ with parts $S$ and $T$
of equal size together with a perfect matching $\varphi$, one can construct a digraph $D$ with vertex
set $S$ in the following way. For each pair $x,y$ of vertices if $x\varphi(y) \in E(G)$, then $\overrightarrow{xy}$
is an arc of $D$. The constructed digraph has $G$ as its incidence bipartite graph.
\qed\end{proof}

An example of the correspondence between a bipartite graph with a
perfect matching (thick edges) and a digraph is given in Figure~\ref{fig:example}.

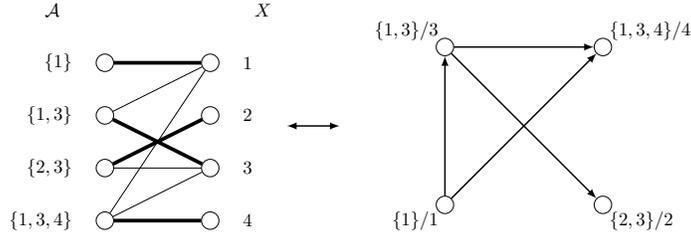
\begin{figure}
\begin{center}
\scalebox{0.7}{\begin{tikzpicture}

\path (-1,4) node {$\mathcal{A}$};
\path (0,3) node[draw,shape=circle] (s1) {};
\draw (s1) node[left=0.5cm] {$\{1\}$};
\path (0,2) node[draw,shape=circle] (s13) {};
\draw (s13) node[left=0.5cm] {$\{1,3\}$};
\path (0,1) node[draw,shape=circle] (s23) {};
\draw (s23) node[left=0.5cm] {$\{2,3\}$};
\path (0,0) node[draw,shape=circle] (s134) {};
\draw (s134) node[left=0.5cm] {$\{1,3,4\}$};

\path (3,4) node {$X$};
\path (2,3) node[draw,shape=circle] (e1) {};
\draw (e1) node[right=0.5cm] {$1$};
\path (2,2) node[draw,shape=circle] (e2) {};
\draw (e2) node[right=0.5cm] {$2$};
\path (2,1) node[draw,shape=circle] (e3) {};
\draw (e3) node[right=0.5cm] {$3$};
\path (2,0) node[draw,shape=circle] (e4) {};
\draw (e4) node[right=0.5cm] {$4$};

\draw (s1) -- (e1) -- (s13) -- (e3) -- (s134) -- (e4)
      (e1) -- (s134)
      (e2) -- (s23) -- (e3);
\draw[line width=2pt] (e1) -- (s1)
                        (e2) -- (s23)
                        (e3) -- (s13)
                        (e4) -- (s134);
\end{tikzpicture}}
\scalebox{0.7}{\begin{tikzpicture}

\path (0,0) node[draw,shape=circle] (e1) {};
\draw (e1) node[below left] {$\{1\}/1$};
\path (3,0) node[draw,shape=circle] (e2) {};
\draw (e2) node[below right] {$\{2,3\}/2$};
\path (0,3) node[draw,shape=circle] (e3) {};
\draw (e3) node[above left] {$\{1,3\}/3$};
\path (3,3) node[draw,shape=circle] (e4) {};
\draw (e4) node[above right] {$\{1,3,4\}/4$};

\draw[->, >=latex, thick] (e3) -- (e2);
\draw[->, >=latex, thick] (e3) -- (e4);
\draw[->, >=latex, thick] (e1) -- (e3);
\draw[->, >=latex, thick] (e1) -- (e4);
\draw[<->, >=latex, thick] (-3,1.5) -- +(1,0);

\end{tikzpicture}}
\caption{Digraph of a bipartite graph}\label{fig:example}
\end{center}
\end{figure}

We are now ready to achieve our goal of classifying the extremal set systems in Bondy's theorem
when $|\mathcal{A}|=|X|$.

\begin{theorem}\label{ApplicationToBondy}
A set system  $(\mathcal{A},X)$  with $|\mathcal{A}|=|X|$ is extremal
if and only if its incidence bipartite graph $B(\mathcal{A},X)$ is 
the incidence bipartite graph of a digraph in $(K_1, \oplus, \dir\triangleleft)$.
\end{theorem}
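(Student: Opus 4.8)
The plan is to translate the statement about extremal set systems directly into the language of digraphs via the incidence-bipartite-graph correspondence, and then invoke the characterization of finite extremal digraphs already obtained in Section~\ref{sec:finite}. First I would observe that a set system $(\mathcal{A},X)$ with $|\mathcal{A}|=|X|$ in which all sets are distinct and nonempty corresponds, via Lemma~\ref{BipartiteToDigraph}, to a digraph $D$: indeed $B(\mathcal{A},X)$ has equal parts of size $n$, and nonemptiness of every $A_i$ together with distinctness gives exactly the condition that $B(\mathcal{A},X)$ is the incidence bipartite graph of a digraph $D$ precisely when it admits a suitable perfect matching. The subtlety here is that Lemma~\ref{BipartiteToDigraph} only asserts existence of \emph{some} perfect matching; I would need to check that the nonemptiness and distinctness hypotheses guarantee one can realize $(\mathcal{A},X)$ as $B(D)$ with the matching being the diagonal $x_i \mapsto x_i'$, i.e.\ that there is a labelling making $A_i \ni x_i$ for all $i$ and the $A_i$ distinct — this is a Hall-type argument, and distinctness of the $A_i$ plus the ``at most one empty'' slack is exactly what makes it go through (cf.\ the proof of Proposition~\ref{PropositionNMinusOne} and Bondy's theorem itself).

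Next I would show the key dictionary: under the correspondence $A_i = B_1^+(x_i) = N^+(x_i)\cup\{x_i\}$, the set system $(\mathcal{A},X)$ is extremal if and only if $\id(D)=|V(D)|$. One direction: ``for every $x$, either some $A_i - x = \emptyset$ or some $A_i - x = A_j - x$'' says precisely that for every vertex $x$, the set $V(D)\setminus\{x\}$ fails to be an identifying code — either because $x$ becomes undominated (the $A_i-x=\emptyset$ with $i$ such that $A_i=\{x\}$, equivalently $x$ has in-degree $0$ and... wait, more carefully: $A_i - x = \emptyset$ means $A_i \subseteq \{x\}$, and since $A_i\neq\emptyset$ this forces $A_i=\{x\}$, i.e.\ $x_i$'s ball is exactly $\{x\}$, which in digraph terms says $x_i$ has no out-neighbours and removing $x$ kills domination of $x_i$) or because two vertices become unseparated. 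By Proposition~\ref{Eq:idcode-sepcode} and the fact that $V(D)\setminus\{x\}$ ranging over all $x$ gives all candidate $(n-1)$-subsets up to the relevant failure modes, this is equivalent to no $(n-1)$-subset being identifying, hence (together with Proposition~\ref{PropositionNMinusOne}) to $\id(D)=|V(D)|$. I should be a little careful that ``extremal'' quantifies only over singletons $V(D)\setminus\{x\}$ while $\id(D)=|V(D)|$ forbids \emph{all} proper subsets, but a smaller separating code could be extended, so it suffices to rule out the $(n-1)$-subsets of the special form — and in fact any identifying code of size $<n$ can be padded to one of size $n-1$, and an identifying code of size $n-1$ omits exactly one vertex $x$, so the two conditions coincide.

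Finally, having established that $(\mathcal{A},X)$ extremal $\iff$ $D$ is twin-free (automatic from the $A_i$ being distinct) with $\id(D)=|V(D)|$, I apply Theorem~\ref{M(D)=|V(D)|} and Proposition~\ref{TreeRepresentation}: the latter gives the forward-free direction (every $D\in(K_1,\oplus,\dir\triangleleft)$ has $\id(D)=|V(D)|$), and the former gives the converse ($\id(D)=|V(D)|$ forces $D\in(K_1,\oplus,\dir\triangleleft)$). Combining, $(\mathcal{A},X)$ is extremal iff $B(\mathcal{A},X)$ is the incidence bipartite graph of some $D\in(K_1,\oplus,\dir\triangleleft)$, which is the claim. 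I expect the main obstacle to be the bookkeeping in the first step — namely verifying that an \emph{abstract} extremal set system, where the indexing of $\mathcal{A}$ and of $X$ is a priori unrelated, can always be re-indexed so that $x_i\in A_i$, so that $B(\mathcal{A},X)$ genuinely sits in the image of the $D\mapsto B(D)$ map with the diagonal as the distinguished matching; here one uses that $\mathcal{A}$ is $S$-identifiable (all sets distinct) hence has an $S$-separating code of size $n-1$ by Bondy, whence a system of distinct representatives exists, giving the required matching. The rest is a clean dictionary argument plus citing Theorem~\ref{M(D)=|V(D)|}.
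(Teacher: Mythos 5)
Your overall architecture is the same as the paper's: realize $B(\mathcal{A},X)$ as the incidence bipartite graph $B(D)$ of a digraph $D$ via a perfect matching (Lemma~\ref{BipartiteToDigraph}), translate extremality into $\id(D)=|V(D)|$ (using the monotonicity remark that an identifying code of size less than $n$ extends to one of size $n-1$, so it suffices to quantify over the sets $V(D)-x$), and then invoke Proposition~\ref{TreeRepresentation} and Theorem~\ref{M(D)=|V(D)|}. That dictionary and the direction ``$B(\mathcal{A},X)=B(D)$ with $D\in(K_1,\oplus,\dir\triangleleft)$ implies extremal'' are fine. The gap is in the converse direction, at the very first step: before you can apply Lemma~\ref{BipartiteToDigraph} you must prove that an extremal system admits a system of distinct representatives, i.e.\ that $B(\mathcal{A},X)$ has a perfect matching, and both justifications you offer for this are incorrect. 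Distinctness and nonemptiness of the $A_i$ do \emph{not} imply Hall's condition: take $X=\{1,2,3\}$ and $\mathcal{A}=\{\{1\},\{2\},\{1,2\}\}$, where $N(\mathcal{A})=\{1,2\}$ has size $2<3$, so there is no perfect matching. The same example refutes your claim that an $S$-separating code of size $n-1$ yields an SDR: $\{1,2\}$ is a separating code of size $2$, yet no SDR exists. (That system is not extremal --- choose $x=3$ --- which is consistent with the theorem, but that implication is precisely what has to be proved, not assumed.)

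The paper closes this gap with a genuine argument (its Case~b): if $B(\mathcal{A},X)$ has no perfect matching, Hall's theorem gives $X'\subseteq X$ with $|N_B(X')|<|X'|$; restricting each $A_i$ to $X'$ and discarding empty traces yields a system $(\mathcal{A}',X')$ with $|\mathcal{A}'|<|X'|$, and the strengthened form of Bondy's theorem (Proposition~\ref{emptyset}) then produces an element $x_0\in X'$ such that all the sets $A_i-x_0$ remain distinct and nonempty, contradicting extremality. Hence extremal systems do admit a perfect matching, and only then does your Case-a-style argument apply. You need this deficiency argument (or an equivalent one); as written, your proof covers only those set systems that happen to admit a perfect matching. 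Apart from this missing step, your proposal is essentially the paper's proof.
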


\begin{proof}
If $B(\mathcal{A},X)$ is the incidence bipartite graph of a member $D$
of $(K_1, \oplus, \dir\triangleleft )$, then by
Theorem~\ref{M(D)=|V(D)|} we have $\id(D)=|V(D)|$ thus by
Proposition~\ref{Eq:idcode-sepcode} with any separating code of size
$|V(D)|-1$ there must be a vertex which is not dominated. Any
separating code $V(D)-x$ of $D$ corresponds to the choice of $x$ in
Bondy's theorem, and leaves some vertex in $D$ undominated --- that
is, there exists $A_i\in\mathcal{A}$ such that
$A_i-x=\emptyset$. Hence $(\mathcal{A},X)$ is extremal.

For the other direction, we distinguish two cases.

\textbf{Case a.} $B(\mathcal A, X)$ admits a perfect matching. Then
the directed graph $D$ built from $B(\mathcal A, X)$ using this
matching as in the method of Proposition~\ref{Eq:idcode-sepcode} is
such that $\id(D)=|V(D)|$. Thus by Theorem~\ref{M(D)=|V(D)|}, $D\in
(K_1, \oplus, \dir\triangleleft)$ and we are done.

\textbf{Case b.} $B=B(\mathcal A, X)$ does not contain a perfect matching. 
Then by Hall's marriage theorem (see e.g.~\cite{B86}), there is a subset $X'$
of $X$ such that $|N_B(X')|<|X'|$. For $A_i\in \mathcal A$ we define $A_i'=A_i\cap X'$
and $\mathcal {A'}=\{A_1'\cdots A'_{|\mathcal A|}\} -\emptyset$. 
Consider the set system $(\mathcal A',X')$ ($|\mathcal A'|< |X'|$). 
By Proposition~\ref{emptyset} there is an element
$x_0$ in $X'$ such that $A_i'-x_0$ are all nonempty and distinct as long as they induce distinct
elements in $\mathcal A'$. Now it is easy to check that $X-x_0$ induces  nonempty and distinct
elements on $\mathcal A$.
\qed\end{proof}

Using the previous theorem we can describe the extremal set systems  $(\mathcal{A},X)$
with $|\mathcal{A}|=|X|$ purely in the terminology of sets:

\begin{corollary}
A set system $(\mathcal{A},X)$ with $|\mathcal{A}|=|X|$ is extremal if and only if:
\begin{itemize}
\item $\bigcup A_i = X$
\item for any subset $A_i$ with at least two elements, there is an element $x$ of $A_i$ such that $A_i-x \in \mathcal A$
\end{itemize}

\end{corollary}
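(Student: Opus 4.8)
The plan is to derive this corollary directly from Theorem~\ref{ApplicationToBondy} by unwinding what it means for $B(\mathcal{A},X)$ to be the incidence bipartite graph of some digraph $D \in (K_1, \oplus, \dir\triangleleft)$. By Lemma~\ref{BipartiteToDigraph}, the property ``$B(\mathcal{A},X)$ is the incidence bipartite graph of \emph{some} digraph'' is equivalent to $B(\mathcal{A},X)$ admitting a perfect matching between the two parts $\mathcal{A}$ and $X$. So first I would show that the two set-theoretic conditions in the corollary force such a matching to exist, and conversely. The condition $\bigcup A_i = X$ says every element of $X$ lies in some set, i.e. no vertex of the $X$-side is isolated; the second condition (every $A_i$ with $|A_i|\ge 2$ contains an element $x$ with $A_i - x \in \mathcal{A}$) is the combinatorial shadow of the forest structure from Observation~\ref{TransitveClosuresOfTrees}: in $\dir{F}(D)$ each non-root vertex $x$ has a father $x_{-1}$ with $B_1^+(x) = B_1^+(x_{-1}) \cup \{x\}$, which translates to $A_x - x = A_{x_{-1}}$.

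The cleaner route is probably to argue both directions through Theorem~\ref{ApplicationToBondy} rather than re-proving a matching characterization from scratch. For the ``only if'' direction: assume $(\mathcal{A},X)$ is extremal. If $\bigcup A_i \ne X$, pick $x_0 \notin \bigcup A_i$; then deleting $x_0$ changes no set, so $A_i - x_0 = A_i$ are all distinct and nonempty, contradicting extremality — so $\bigcup A_i = X$. For the second condition, I would invoke Theorem~\ref{ApplicationToBondy}: $B(\mathcal{A},X)$ is the incidence bipartite graph of some $D \in (K_1,\oplus,\dir\triangleleft)$, hence by Observation~\ref{TransitveClosuresOfTrees}, $D$ is the transitive closure of a rooted oriented forest $\dir{F}(D)$. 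Translating back, for each $x_i \in X$ the set $A_i$ equals $B_1^+(x_i)$ in $D$ (under the identification given by the matching), and $B_1^+(x_i)$ is exactly the set of descendants of $x_i$ together with $x_i$. If $|A_i| \ge 2$ then $x_i$ is not a leaf, so it has a child; actually I want the \emph{father}: since $x_i$ has $\ge 1$ proper descendant, consider — wait, I need $A_i - x$ with $x \in A_i$ to land in $\mathcal{A}$. The right choice is: let $x$ be a child of $x_i$ in $\dir{F}(D)$ such that... hmm. Let me reconsider: $A_i - x_i$ for $x_i$ itself gives the descendants of $x_i$ excluding $x_i$; this is $B_1^+$ of... not generally a single vertex. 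Instead, the correct statement is that for a non-leaf $x_i$, picking a leaf $x_\ell$ in the subtree below $x_i$ won't work either. The actual mechanism: in $\dir{F}(D)$, $x_i$ with $|A_i|\ge 2$ has a child $c$; then removing $c$ from $A_i = B_1^+(x_i)$ does not obviously give a ball. So the father relation is the key: $A_i - x_i = B_1^+(x_{i,-1}) \setminus$ (stuff) — I'll need to think carefully, but the upshot is that the recursive structure of $(K_1,\oplus,\dir\triangleleft)$, where each $\dir\triangleleft$-step adds a new source dominating everything, means every set with $\ge 2$ elements has an element whose removal yields another set in the family.

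For the ``if'' direction: assume the two conditions hold. I would first build a perfect matching in $B(\mathcal{A},X)$ — match each $A_i$ to one of its own elements, using the second condition to recurse: a singleton $\{a\}$ gets matched to $a$; a set $A_i$ with $|A_i|\ge 2$ has $A_i - x_i \in \mathcal{A}$ for some $x_i$, so (peeling off elements) one can orient a ``parent'' digraph on $\mathcal{A}$ and extract a system of distinct representatives via Hall's theorem, the first condition $\bigcup A_i = X$ ensuring the matching is also perfect on the $X$-side (a counting argument: $|\mathcal{A}| = |X|$, every $X$-vertex covered). Then the digraph $D$ built from this matching as in Lemma~\ref{BipartiteToDigraph} has $B(D) = B(\mathcal{A},X)$, and extremality of $(\mathcal{A},X)$ gives, via Theorem~\ref{ApplicationToBondy} read in reverse together with Proposition~\ref{Eq:idcode-sepcode}, that $\id(D) = |V(D)|$, so $D \in (K_1,\oplus,\dir\triangleleft)$ and we conclude extremality. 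Actually it's cleaner to go forward: once we know $B(\mathcal{A},X)$ is an incidence bipartite graph of a digraph (matching exists) \emph{and} we can show directly that $(\mathcal{A},X)$ is extremal (using the two conditions: $\bigcup A_i = X$ kills the ``$A_i - x \ne \emptyset$ for all $i$'' escape only partially — one still must handle the separation clause), Theorem~\ref{ApplicationToBondy}'s ``only if'' is not what we want; we want to directly verify the definition of extremal. So: given $x \in X$, either some $A_i = \{x\}$ (then $A_i - x = \emptyset$, done), or no $A_i$ is $\{x\}$; then... I'd use that $x \in A_j$ for some $j$ with $|A_j| \ge 2$ (by the first condition and the no-singleton-$\{x\}$ assumption — careful, $x$ could still be a singleton of a set it's not in... no: $x \in \bigcup A_i$ means $x \in$ some $A_j$), and the second condition to find a near-collision. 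The main obstacle, and the part I expect to be fiddly, is exactly this last verification: translating ``for any $x$ there is a set becoming empty or a collision'' purely from the two bullet conditions without routing through the digraph — it may genuinely be easier to just cite Theorem~\ref{ApplicationToBondy} in both directions and spend the work on the equivalence ``[two bullet conditions] $\iff$ [$B(\mathcal{A},X)$ is the incidence bipartite graph of a member of $(K_1,\oplus,\dir\triangleleft)$]'', which is a statement about matchings plus the inductive forest structure of the family. That matching-plus-forest translation is where I'd concentrate the proof.
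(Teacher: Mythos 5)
Your proposal is a plan rather than a proof: in both directions the decisive verification is left open, and in one place you go wrong concretely. For the direction ``extremal $\Rightarrow$ second bullet'' you correctly route through Theorem~\ref{ApplicationToBondy} and the forest structure, but you misread the correspondence: under the identification coming from the perfect matching, $A_i$ is the \emph{incoming} ball $B_1^+(x_i)$, which in the transitive closure of a rooted forest is $\{x_i\}$ together with the \emph{ancestors} of $x_i$, not its descendants. This is exactly why your search for the right element to delete stalls. With the correct reading the mechanism is immediate: if $|A_i|\ge 2$ then $x_i$ is not a root, so it has a father $f$, and $A_i - x_i = \text{ancestors}(x_i) = B_1^+(f) = A_f \in \mathcal{A}$; the element to remove is the matched element $x_i$ itself. (Your direct argument for $\bigcup A_i = X$ is fine, and arguably more explicit than the paper's.)

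For the converse you explicitly flag the verification as ``fiddly'' and do not supply it, instead sketching a matching/Hall-type construction that is itself nontrivial (the local choices of $y$ with $A_i - y \in \mathcal{A}$ need not assemble into a perfect matching without further argument). The paper's proof of this direction is short and purely set-theoretic, and does not pass through the digraph at all: suppose some $x$ had all $A_k - x$ distinct and nonempty, and let $A_i$ be a \emph{smallest} set containing $x$ (it exists since $\bigcup A_k = X$). Then $A_i \neq \{x\}$ (else $A_i - x = \emptyset$), so some $y \in A_i$ has $A_i - y = A_j \in \mathcal{A}$. If $y = x$ then $A_i - x = A_j = A_j - x$, contradicting distinctness; if $y \neq x$ then $A_j$ is a smaller member of $\mathcal{A}$ containing $x$, contradicting minimality. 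This ``take the smallest set containing $x$'' trick is the missing idea that closes your second direction without any matching machinery, and it is where you should have concentrated the work.
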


\begin{proof} 
If $D$ is a digraph of $(K_1, \oplus, \dir\triangleleft)$, it is easy to see that the set system 
corresponding to its incidence bipartite graph $B(D)$ has the properties.

Now, let $(\mathcal{A},X)$ be a set system having the properties of the corollary. 
Assume there is an element $x$ such that all $A_i-x$ are distinct and nonempty.
Take $A_i$ to be the smallest subset containing $x$, it exists because $\bigcup A_i = X$.
Then $A_i\neq \{x\}$ and so there is an element 
$y$ such that $A_i-y=A_j$. Then necessarily $x\neq y$ and $A_j$ is a smaller set than $A_i$ containing $x$. 
This is a contradiction. 
\qed\end{proof}

We conclude this section by relating this work to a similar problem
studied by Charon et al. in~\cite{CCCHL08}. We have previously seen that
Bondy's theorem can be stated in the language of separating codes in
bipartite graphs. Furthermore, the extremal case we have studied, where
a minimum separating code does always give an undominated vertex, is
equivalent to the one where the only minimum discriminating code consists in the
whole vertex set (in fact Proposition~\ref{Eq:idcode-sepcode} holds also
for separating and discriminating codes in bipartite graphs). 
Therefore Theorem~\ref{ApplicationToBondy} can be stated in the language of discriminating codes:

\begin{corollary}
  Given a bipartite graph $G=(S\cup T,E)$ with $|S|=|T|$ which is $S$-identifable,
  a minimum discriminating code 
  $C\subseteq T$ of $G$ has
  size $|S|$ if and only if $G$ is the
  incidence bipartite graph $B(D)$ of some digraph $D$ in $(K_1, \oplus, \dir\triangleleft)$.
\end{corollary}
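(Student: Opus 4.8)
The plan is to deduce this corollary directly from Theorem~\ref{ApplicationToBondy} together with the dictionary between set systems and bipartite graphs, so that almost no new work is required. First I would observe that the statement ``a minimum discriminating code $C\subseteq T$ of $G$ has size $|S|$'' is just the bipartite-graph rephrasing of ``$(\mathcal A,X)$ is extremal'', where $(\mathcal A,X)$ is the set system whose incidence bipartite graph is $G$. Concretely, take $X=S$ and let $\mathcal A=\{N(t)\mid t\in T\}$ viewed with multiplicity/labels by the vertices of $T$; since $G$ is $S$-identifiable the sets $N(t)$ are pairwise distinct and each vertex of $S$ lies in some $N(t)$, so every $N(t)$ is nonempty. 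Then a discriminating code of $G$ is exactly a choice of $C\subseteq T$ such that the restricted neighbourhoods $N(s)\cap C$ are nonempty and distinct for all $s\in S$, which by the incidence correspondence is the same as choosing $X-X'$ (with $X'$ the complement) so that the sets $A\cap(X-X')$ are nonempty and distinct.

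Next I would invoke Proposition~\ref{Eq:idcode-sepcode} in its bipartite-graph form, as the authors remark it holds there: a minimum discriminating code has size $|T|=|S|$ exactly when every minimum $S$-separating code leaves some vertex of $S$ undominated, i.e.\ when for \emph{every} $x\in X$ the family $\{A_i-x\}$ either repeats a set or produces $\emptyset$. That is precisely the definition of $(\mathcal A,X)$ being extremal. So ``minimum discriminating code of size $|S|$'' $\iff$ ``$(\mathcal A,X)$ extremal''. Here I should be slightly careful to note the trivial degenerate cases (e.g.\ $G$ with no edges, or $|S|=|T|=1$) are handled correctly by the definitions, but these cause no real trouble.

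Finally I would apply Theorem~\ref{ApplicationToBondy}: $(\mathcal A,X)$ with $|\mathcal A|=|X|$ is extremal if and only if $B(\mathcal A,X)$ is the incidence bipartite graph of a digraph in $(K_1,\oplus,\dir\triangleleft)$. Since $B(\mathcal A,X)=G$ by construction, this gives exactly the claimed equivalence: the minimum discriminating code of $G$ has size $|S|$ iff $G=B(D)$ for some $D\in(K_1,\oplus,\dir\triangleleft)$. The only genuine point needing care — and the place I would slow down — is verifying that the two ``translations'' (set system $\leftrightarrow$ bipartite graph, and the parallel of Proposition~\ref{Eq:idcode-sepcode} for discriminating/separating codes) really are faithful in both directions, in particular that $S$-identifiability of $G$ corresponds to the ``distinct and nonempty'' hypothesis on $\mathcal A$ and that a perfect matching is not secretly being assumed where it isn't needed. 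Once that bookkeeping is done, the corollary is immediate and the proof is short.
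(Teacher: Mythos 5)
Your proposal is correct and follows essentially the same route as the paper, which likewise derives this corollary by translating Theorem~\ref{ApplicationToBondy} through the set-system/bipartite-graph dictionary and the bipartite analogue of Proposition~\ref{Eq:idcode-sepcode}. One bookkeeping slip: the roles should be $X=T$ and $\mathcal A=\{N(s)\mid s\in S\}$ (not $X=S$ with sets $N(t)$), since the discriminating code lives in $T$ and $S$-identifiability guarantees the neighbourhoods of the vertices of $S$ are distinct and nonempty; your subsequent sentences already use the correct correspondence, so the argument goes through.
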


\end{document}